\definecolor{DarkBlue}{rgb}{0.1,0.1,0.5}
\definecolor{DarkGreen}{rgb}{0.1,0.5,0.1}
\newcommand{\extra}[1]{}
\newtheorem{theorem}{Theorem}
\newtheorem{definition}{Definition}
\newtheorem{lemma}[theorem]{Lemma}
\newtheorem{proposition}[theorem]{Proposition}
\def\squareforqed{\hbox{\rlap{$\sqcap$}$\sqcup$}}
\def\qed{\ifmmode\squareforqed\else{\unskip\nobreak\hfil
\penalty50\hskip1em\null\nobreak\hfil\squareforqed
\parfillskip=0pt\finalhyphendemerits=0\endgraf}\fi}
\def\endenv{\ifmmode\;\else{\unskip\nobreak\hfil
\penalty50\hskip1em\null\nobreak\hfil\;
\parfillskip=0pt\finalhyphendemerits=0\endgraf}\fi}
\renewenvironment{proof}{\noindent \textbf{{Proof~} }}{\qed\medskip}
\newenvironment{proof+}[1]{\noindent \textbf{{Proof #1~} }}{\qed\medskip}
\mathchardef\ordinarycolon\mathcode`\:
\def\vcentcolon{\mathrel{\mathop\ordinarycolon}}
\DeclareMathOperator*{\argmin}{arg\,min}
\title{\bfseries Approximating Nash Social Welfare Under Binary $\XOS$ and Binary Subadditive Valuations}
\author{Siddharth Barman\thanks{Indian Institute of Science. {\tt barman@iisc.ac.in}} \and Paritosh Verma\thanks{Purdue University. {\tt paritoshverma97@gmail.com}}}
\date{}
\newcommand{\XOS}{\mathrm{XOS}}
\newcommand{\alloc}{\mathcal{A}}
\newcommand{\palloc}{\mathcal{P}}
\newcommand{\SW}{\operatorname{SW}}
\newcommand{\const}{18}
\newcommand{\constMinTwo}{16}
\newcommand{\approxConst}{288}              
\newcommand{\NSW}{\mathrm{NSW}}
\newcommand{\MMS}{\mathrm{MMS}}
\newcommand{\GMMS}{\mathrm{GMMS}}
\newcommand{\Alg}{\textsc{Alg}}           
\newcommand{\breakone}{p}
\newcommand{\breaktwo}{q}
\begin{document}

\maketitle 

\begin{abstract}
We study the problem of allocating indivisible goods among agents in a fair and economically efficient manner.  In this context, the Nash social welfare---defined as the geometric mean of agents' valuations for their assigned bundles---stands as a fundamental measure that quantifies the extent of fairness of an allocation. Focusing on instances in which the agents' valuations have binary marginals, we develop essentially tight results for (approximately) maximizing Nash social welfare under two of the most general classes of complement-free valuations, i.e., under binary $\XOS$ and binary subadditive valuations.  

For binary $\XOS$ valuations, we develop a polynomial-time algorithm that finds a constant-factor (specifically $\approxConst$) approximation for the optimal Nash social welfare, in the standard value-oracle model. The allocations computed by our algorithm also achieve constant-factor approximation for social welfare and the groupwise maximin share guarantee. These results imply that---in the case of binary $\XOS$ valuations---there necessarily exists an allocation that simultaneously satisfies multiple (approximate) fairness and efficiency criteria. We complement the algorithmic result by proving that Nash social welfare maximization is $\mathrm{APX}$-hard under binary $\XOS$ valuations.

Furthermore, this work establishes an interesting separation between the binary $\XOS$ and binary subadditive settings. In particular, we prove that an exponential number of value queries are necessarily required to obtain even a sub-linear approximation for Nash social welfare under binary subadditive valuations. 
\end{abstract}

\section{Introduction}
At the core of discrete fair division lies the problem of fairly allocating \emph{indivisible} goods among agents with equal entitlements, but distinct preferences. In this context, the Nash social welfare \cite{Nash50}---defined as the geometric mean of agents' valuations for their assigned bundles---stands as a fundamental measure that quantifies the extent of fairness of an allocation. This welfare function achieves a balance between the extremes of social welfare and egalitarian welfare. The relevance of Nash social welfare is further substantiated by the fact that it satisfies key fairness axioms, including the Pigou-Dalton transfer principle \cite{Moul03}. Furthermore, Nash social welfare is indifferent to individual scales of the valuations: multiplicatively scaling any agent's valuation by a positive number does not alter the relative ordering of the allocations (induced by this welfare objective) and, in particular, keeps the Nash optimal allocation unchanged. In terms of practical applications, Nash social welfare is used as an optimization criterion by the widely-used platform {\tt Spliddit.org} for finding fair allocations~\cite{GP15}. 

With these considerations in hand, a substantial body of research in recent years has been directed towards maximizing the Nash social welfare in settings with indivisible goods; see, e.g., \cite{CG15,CDGJ+17,AGS+17nash,BGH+17earning,AGM+18nash,BKV18,GHM18}. This maximization problem is known to be $\mathrm{APX}$-hard, even when the agents have additive valuations \cite{Lee17}.\footnote{Recall that a valuation $v$ is said to be additive iff the value of any subset of goods, $S$, is the sum of values of the goods in it, $v(S) = \sum_{g \in S} v(\{g\})$.} Hence, in general, algorithmic results for this problem aim for approximation guarantees. A key focus of this line of research has been on the hierarchy of complement-free valuations, which includes the following valuation classes, in order of containment: additive, submodular, $\XOS$ (fractionally subadditive), and subadditive. Recall that submodular functions satisfy a diminishing returns property, $\XOS$ functions are pointwise maximizers of additive functions, and subadditive functions constitute the most general class in this hierarchy. These valuation classes have also been extensively studied in the literature on combinatorial auctions, wherein the focus is primarily on maximizing social welfare \cite{nisan2007algorithmic}. 

In the context of maximizing Nash social welfare, the best-known approximation algorithm for additive valuations achieves an approximation ratio of $e^{1/e}$ (in polynomial time) \cite{BKV18}. Furthermore, for submodular valuations, a recent result of  Li and Vondr\'{a}k \cite{LV21} obtains a constant-factor (specifically $380$) approximation ratio; see also \cite{garg2021approximating}. In contrast to these constant-factor bounds, the problem of maximizing Nash social welfare under (general) $\XOS$ and subadditive valuations has a linear (in the number of agents) approximation guarantee \cite{BBKS20, chaudhury2020fair,GargKK20}. This approximation ratio is in fact tight in the standard value-oracle model: under general $\XOS$ (and, hence, subadditive) valuations, a sub-linear approximation of the optimal Nash social welfare necessarily requires exponentially many value queries \cite{BBKS20}. 

The current work contributes to this thread of research with a focus on valuations that have binary marginals. Formally, a valuation $v$ is said to bear the binary-marginals property iff, for every subset of goods $S$ and each good $g$, the marginal value of $g$ relative to $S$ is either zero or one, $v( S  \cup \{g \}) - v(S) \in \{0,1\}$. Such valuations capture preferences in many real-world domains and have received significant attention in the fair division literature; see, e.g., \cite{BMS05,RSU05,BL08,F10,KPS18,O20}. Our results address the two most general valuation classes in the above-mentioned hierarchy, i.e., we study $\XOS$ and subadditive valuations in conjunction with the binary-marginals property. This meaningfully extends prior work on binary additive and binary submodular valuations. Throughout, we will say that a valuation is binary additive (submodular/$\XOS$/subadditive) iff it is additive (submodular/$\XOS$/subadditive) and also has binary marginals. 

In particular, under binary additive valuations,  Nash optimal allocations can be computed in polynomial time~\cite{DS15,barman2018greedy}. The work of Halpern et al.~\cite{halpern2020fair} shows that---in the case of binary additive valuations---maximizing Nash social welfare (with a lexicographic tie-breaking rule)  provides a truthful and fair\footnote{Specifically, the mechanism of Halpern et al.~\cite{halpern2020fair} ensures envy-freeness up to one good.} mechanism. For the broader class of binary submodular valuations,\footnote{Binary submodular functions admit the following characterization: every binary submodular function is necessarily a rank function of a matroid \cite[Chapter~39]{schrijver2003combinatorial}.} a truthful, fair, and polynomial-time  mechanism was obtained by Babaioff et al.~\cite{BEF2020}; this result considers {\it Lorenz domination} as a notion of fairness and, hence, ensures that the computed allocation maximizes the Nash social welfare and satisfies other fairness criteria. These works identify multiple domains wherein binary additive and binary submodular functions are applicable; see also \cite{benabbou2020finding}. 

The current work moves up in the hierarchy of complement-free valuations and develops essentially tight results for both binary $\XOS$ and binary subadditve valuations. Before detailing our results, we provide a stylized example that illustrates the relevance of such a generalization: consider a spectrum-allocation setting wherein transmission rights of distinct (frequency) bands have to be fairly allocated among different agents. Here, each band is an indivisible good of unit value, to every agent. However, an agent can utilize a subset of bands only if the frequencies across the allocated bands are close enough. In particular, say the bands, $B_1, B_2,\ldots, B_m$, are indexed such that an agent can use a subset of bands only if their indices are within a parameter $\Delta  \in \mathbb{Z}_+$ of each other. For instance, if $\Delta  = 3$, then an agent will have value two for the bundle $\{B_1, B_4, B_{10}\}$ and value the bundle $\{B_1, B_4, B_5, B_7\}$ at three, by transmitting on $B_4$, $B_5$, and $B_7$. We note that such valuations can be expressed as binary $\XOS$ functions (and not as submodular functions).  Hence, in such a resource-allocation setting, finding a fair, or economically efficient, allocation falls under the purview of the current work. For a realistic treatment of spectrum allocations, and associated high-stakes auctions, see \cite{LMS17} and references therein. 

\paragraph{Our Results.}
The following list summarizes our contributions. We note that the algorithmic results in this work only require access to standard value queries: given any subset of goods $S$ and an agent $i$, the value oracle returns the value that $i$ has for $S$.\footnote{That is, the developed algorithms do not require an explicit description of the valuations. Note that in the current context the agents' valuations are combinatorial set functions, hence explicitly  representing the valuations might be prohibitive, i.e., require one to specify exponential (in the number of goods) values.}  

\begin{enumerate}[leftmargin=0.4cm]
\item We develop a polynomial-time $\approxConst$-approximation algorithm for maximizing Nash social welfare under binary $\XOS$ valuations (Theorem \ref{theorem:approximation-ratio}). We also complement this algorithmic result by proving that---under binary $\XOS$ valuations---Nash social welfare maximization is $\mathrm{APX}$-hard (Theorem \ref{theorem:apx-hard}). 

To obtain the approximation guarantee, we consider allocations wherein, for each agent $i$, the envy is multiplicatively bounded towards the entire set of goods, $G_i$, allocated to agents with bundle size at least four times that of $i$. Specifically, for an allocation, write $H_i$ to denote the set of agents who have received a bundle of size at least four times that of $i$, and let $G_i$ denote the set of goods allocated among all the agents in $H_i$ (along with unallocated goods, if any). We show that, under binary $\XOS$ valuations, if, in an allocation $\alloc$, each agent $i$'s value for her own bundle is at least $1/2$ times her value for $G_i$, then $\alloc$ achieves a constant-factor approximation guarantee for Nash social welfare. Our algorithm (Algorithm \ref{algo:apx-mnw}) finds such an allocation by iteratively updating the agents' bundles towards the desired property. The algorithm also maintains an analytically useful property that each agent's value for her bundle is equal to the cardinality of the bundle, i.e., the bundles are \emph{non-wasteful}. For binary $\XOS$ valuations, one can show that multiplicatively bounding envy between pairs of agents does not, by itself, provide a constant-factor approximation guarantee (see Appendix \ref{appendix:nomas-envy}). Hence, bounding envy of every agent $i$ against all of $G_i$ is a crucial extension. It is relevant to observe that while the algorithm is simple, its analysis is based on novel counting arguments (Lemma \ref{lemma:geometric-relationship} and Proposition \ref{prop:constrained-packing}). Notably, the combinatorial nature of the algorithm makes it amenable to large-scale implementations. 

\item Furthermore, our algorithm (Algorithm \ref{algo:apx-mnw}) achieves an approximation ratio of $(3+2 \sqrt{2})$ for the problem of maximizing social welfare under binary $\XOS$ valuations (Theorem \ref{theorem:sw}). That is, the computed allocation simultaneously provides approximation guarantees for Nash social welfare (a fairness metric) and social welfare (a measure of economic efficiency). 

In addition, the allocation (approximately)  satisfies the fairness notion of \emph{groupwise maximin shares} ($\GMMS$); see Section \ref{section:gmms} for definitions. $\GMMS$ is a stronger criterion than the well-studied fairness concept of maximin shares ($\MMS$). Specifically, an allocation $\alloc$ is said to be $\alpha$-$\GMMS$ iff $\alloc$ is $\alpha$-approximately $\MMS$ for every subgroup of agents. The allocations computed by our algorithm are $\nicefrac{1}{6}$-$\GMMS$ (Theorem \ref{theorem:gmms}).   

\item Complementing the above-mentioned positive results, we prove that, under binary subadditive valuations, an exponential number of value queries are necessarily required to obtain a sub-linear approximation for the Nash social welfare (Theorem \ref{theorem:query-complexity}). Indeed, this query complexity bound identifies an interesting dichotomy between the binary subadditive and the binary $\XOS$ settings: while for binary $\XOS$ valuations a polynomial-number of value queries suffice for approximating the optimal Nash social welfare within a constant factor, binary subadditive valuations are essentially as hard as general subadditive (or general $\XOS$) valuations.  
\end{enumerate}

\paragraph{Additional Related Work.} The current work provides a single algorithm that achieves constant-factor approximation guarantees for both Nash social welfare and social welfare, under binary $\XOS$ valuations. Focusing solely on social welfare maximization, one can compute an $\left(e/(e-1) \right)$-approximation (of the optimal social welfare) under general $\XOS$ valuations, using the algorithm of Feige \cite{Feige06}. This result, however, requires oracle access to \emph{demand queries}, which is a more stringent requirement than one used in the current work (of value queries).\footnote{The question of whether sub-linear approximation bounds can be achieved for Nash social welfare with demand-oracle access to general $\XOS$, and subadditive, valuations remains an interesting direction of future work.} 

We obtain the query complexity, under  binary subadditive valuations, by utilizing a lower-bound framework of Dobzinski et al.~\cite{DobzinskiNS10}. In~\cite{DobzinskiNS10}, a lower bound was obtained---for social welfare maximization---under general $\XOS$ and subadditive valuations. The notable technical contribution of the current work is to establish the query complexity with valuations that in fact have binary marginals. Furthermore, one can show that our lower bound (Theorem \ref{theorem:query-complexity}) holds more broadly for maximizing \emph{$p$-mean welfare}, for any $p \leq 1$; this includes social welfare maximization as a special case. Hence, we also strengthen the negative result of \cite{BBKS20}, by showing that it continues to hold even if the marginals of the subadditive valuations are binary. 

Maximin share ($\MMS$) is a prominent fairness notion in discrete fair division \cite{Bud11}. For an agent $i$, this fairness threshold is defined as the maximum value that $i$ can guarantee for herself by partitioning the set of goods into $n$ bundles and receiving the minimum valued one; here, $n$ denotes the total number of agents. While $\MMS$ allocations (i.e., allocations that provide each agent a bundle of value at least as much as her maximin share) are not guaranteed to exist \cite{PW14,KPW16}, 
this fairness notion is quite amenable to approximation guarantees across the hierarchy of complement-free valuations; see, \cite{GT20}, \cite{ghodsi2018fair}, and references therein. Specifically, in the binary-marginals case, $\MMS$ allocations are guaranteed to exist and can be computed efficiently for binary additive \cite{BL16} and binary submodular \cite{BV21} valuations. By contrast, such an existential result does not hold for binary $\XOS$ valuations \cite{BV21}. For such valuations, however, the work of Li and Vetta \cite{li2021fair} provides a polynomial-time algorithm that finds allocations wherein each agent receives a bundle of value at least $0.367$ times her maximin share.\footnote{The result of Li and Vetta \cite{li2021fair} holds for a somewhat more general valuation class, which are  defined via \emph{hereditary set systems}.} The current work addresses the stronger notion of groupwise maximin shares \cite{BBKN18} under binary $\XOS$ valuations.

\section{Notation and Preliminaries}
\label{section:prelims}
We study the problem of allocating $m$ indivisible goods among $n$ agents in a fair and economically efficient manner. Throughout, we will use $[m] \coloneqq \{1,2, \ldots ,m\}$ to denote the set of goods and $[n] \coloneqq \{1,2, \ldots, n\}$ to denote the set of agents. The cardinal preference of the agents $i \in [n]$, over subsets of goods, are expressed via valuations $v_i : 2^{[m]} \mapsto \mathbb{R}_+$. Specifically, $v_i(S) \in \mathbb{R}_+$ denotes the value that agent $i \in [n]$ has for subset of goods $S \subseteq [m]$. We represent fair division instances by the triple $\langle [m], [n], \{v_i\}_{i=1}^n \rangle$. 

An allocation $\alloc = (A_1, A_2, \ldots, A_n)$ is a collection of $n$ pairwise disjoint subsets of goods, $A_i \cap A_j = \emptyset$ for all $i \neq j$. Here, the subset of goods $A_i  \subseteq [m]$ is assigned to agent $i \in [n]$ and will be referred to as a bundle. For ease of presentation and analysis, we do not force the requirement that, in an allocation, all the goods are assigned, i.e., the allocations can be partial with $\cup_{i=1}^n A_i \neq [m]$. Write $A_0 \coloneqq [m] \setminus \left( \cup_{i=1}^n A_i\right)$ to denote the subset of unassigned goods in an allocation $\alloc =  (A_1,\ldots, A_n)$.\footnote{Note that one can always allocate the subset of unassigned goods $A_0$ arbitrarily among the agents without reducing the Nash social welfare of $\alloc =  (A_1, A_2, \ldots, A_n)$.}  

\paragraph{Valuation Classes.} This work focuses on valuations that have the binary-marginals property, i.e., are {dichotomous}. Formally, a valuation $v$ is said to have {binary marginals} iff $v (S \cup \{ g \}) - v (S) \in \{0,1\}$ for any subset of goods $S \subseteq [m]$ and any good $g \in [m]$. As a direct consequence, the valuations we consider are monotonic: $v(S) \leq v(T)$ for any subsets $S \subseteq T \subseteq [m]$. In addition, we assume that the valuations are normalized, $v_i(\emptyset) = 0$ for each $i \in [n]$. 

A set of goods $S \subseteq [m]$ is said to be \emph{non-wasteful}, with respect to a valuation $v$, iff $v(S) = |S|$. Note that, under valuations with binary marginals, subsets of non-wasteful sets are also non-wasteful; a proof of the following proposition is provided in Appendix \ref{appendix:prelims}.

\begin{restatable}{proposition}{PropNonWasteful}
\label{proposition:non-wasteful-subsets}
Let $v: 2^{[m]} \mapsto \mathbb{Z}_+$ be a function with binary marginals and $S \subseteq [m]$ be a non-wasteful set (with respect to $v$), then each subset of $S$ is non-wasteful as well.   
\end{restatable}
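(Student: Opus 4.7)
The plan is to deduce the claim directly from two elementary consequences of the binary-marginals assumption: a universal upper bound $v(T)\le |T|$ on every set, and a matching lower bound on subsets of non-wasteful sets obtained by running the marginals ``in reverse''.

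First I would establish the universal upper bound. Fix any $T\subseteq [m]$ and enumerate its elements as $g_1,\ldots, g_k$. Writing $T_j \coloneqq \{g_1,\ldots,g_j\}$ (with $T_0=\emptyset$), a telescoping sum gives
\[
v(T) \;=\; \sum_{j=1}^{k} \bigl( v(T_j) - v(T_{j-1}) \bigr) \;\le\; k \;=\; |T|,
\]
since each marginal lies in $\{0,1\}$ and $v(\emptyset)=0$. In particular, this bound holds for every subset $S'\subseteq S$.

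Next I would exploit the non-wastefulness of $S$ to obtain the reverse inequality for an arbitrary $S'\subseteq S$. Starting from $S'$ and adding the elements of $S\setminus S'$ one at a time, the same telescoping argument yields $v(S)\le v(S') + |S\setminus S'|$. Since $v(S)=|S|$ and $|S\setminus S'| = |S|-|S'|$, rearranging gives $v(S')\ge |S'|$. Combining this with the universal upper bound $v(S')\le |S'|$ from the first step forces $v(S')=|S'|$, which is exactly the claim that $S'$ is non-wasteful.

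There is no real obstacle here: the proposition is essentially a bookkeeping consequence of the definitions. The only thing to be careful about is to justify both directions of the inequality using the binary-marginals property (once applied to the chain $\emptyset \subseteq S'$ for the upper bound, and once to the chain $S' \subseteq S$ for the lower bound), and to note that $v(\emptyset)=0$ by the normalization convention stated just before the proposition.
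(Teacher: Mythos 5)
Your proof is correct and uses essentially the same argument as the paper: both hinge on the observation that adding the elements of $S\setminus S'$ to $S'$ one at a time increases the value by at most one per step, forcing $v(S')\ge |S'|$, and both combine this with the universal bound $v(S')\le |S'|$. The only cosmetic difference is that the paper phrases this as a proof by contradiction while you argue directly.
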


We consider valuations that---in conjunction with satisfying the binary-marginals property---belong to the following classes of complement-free functions, presented in order of containment. \\

\noindent
(i) \emph{Additive}: A valuation $v: 2^{[m]} \mapsto \mathbb{R}_+$ is said to be additive iff the value of any subset of goods $S \subseteq [m]$ is equal to the sum of values of the goods in it,  $v(S) = \sum_{g \in S} v(\{g\})$. \\

\noindent
(ii) \emph{Submodular}: A valuation $v: 2^{[m]} \mapsto \mathbb{R}_+$ is  said to be submodular iff $v(S \cup \{g\}) - v(S) \geq v(T \cup \{ g \}) - v(T)$ for every $S \subseteq T \subset [m]$ and good $g \in [m] \setminus T$. \\

\noindent
(iii) \emph{$\XOS$}: A valuation $v: 2^{[m]} \mapsto \mathbb{R}_+$ is said to be $\XOS$ iff it can be expressed as a pointwise maximum over a collection of additive functions, i.e., there exists a collection of additive functions $\{\ell_t\}_{t=1}^L$ such that, for every subset $S \subseteq [m]$, we have $v(S) = \max_{ t \in [L] } \ \ell_t(S)$. Here, the number of additive functions, $L$, can be exponentially large in $m$. \\

\noindent
(iv) \emph{Subadditive}: A valuation $v: 2^{[m]} \mapsto \mathbb{R}_+$ is said to be subadditive iff it does not admit any complementary subset of goods: $v(S \cup T) \leq v(S) + v(T)$, for every pair of subsets $S,T \subseteq [m]$. \\

As mentioned previously, our algorithmic results hold in the standard value-oracle model, wherein, given any subset of goods $S \subseteq [m]$ and an agent $i \in [n]$, the value oracle returns $v_i(S) \in \mathbb{R}_+$ in unit time. 

We will use the prefix \emph{binary} before the names of function classes to denote that the valuation additionally has binary marginals, e.g., a function $v$ is binary $\XOS$ iff it is $\XOS$ and has binary marginals. The theorem below (proved in  Appendix \ref{appendix:prelims}) provides useful characterizations of binary $\XOS$ valuations. 
\begin{restatable}{theorem}{ThmBinaryXOSChar}
\label{theorem:xos-definitions}
A valuation $v : 2^{[m]} \mapsto \mathbb{R}_+$ is binary $\XOS$ iff it satisfies anyone of the following equivalent properties \\
($\mathrm{P}_1$): Function $v$ is $\XOS$ and it has binary marginals. \\
($\mathrm{P}_2$): Function $v$ has binary marginals and for every set $S \subseteq [m]$ there exists a subset $X \subseteq S$ with the property that $v(X) = |X| = v(S)$. \\
($\mathrm{P}_3$): Function $v$ can be expressed as a pointwise maximum of binary additive functions $\left\{ \ell_t: 2^{[m]} \mapsto \mathbb{R}_+  \right\}_{t=1}^L$, i.e., $v(S) = \max_{1 \leq t \leq L} \ \ell_t(S)$ for every $S \subseteq [m]$. Here, each function $\ell_t$ is additive and $\ell_t(g) \in \{0,1\}$ for every $g \in [m]$. \\
($\mathrm{P}_4$): There exists a family of subsets $\mathcal{F} \subseteq 2^{[m]}$ such that $v(S) = \max_{F \in \mathcal{F}} |S \cap F|$ for every set $S \subseteq [m]$.
\end{restatable}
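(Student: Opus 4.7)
The plan is to establish the equivalence via the cycle $(\mathrm{P}_1) \Rightarrow (\mathrm{P}_2) \Rightarrow (\mathrm{P}_3) \Rightarrow (\mathrm{P}_4) \Rightarrow (\mathrm{P}_1)$. The implications $(\mathrm{P}_3) \Rightarrow (\mathrm{P}_4)$ and $(\mathrm{P}_4) \Rightarrow (\mathrm{P}_1)$ are essentially bookkeeping: for the former, each binary additive function $\ell_t$ is identified with its support $F_t = \{ g \in [m] : \ell_t(\{g\}) = 1\}$, so that $\ell_t(S) = |S \cap F_t|$ and $\mathcal{F} \coloneqq \{F_t\}_t$ works; for the latter, each $|S \cap F|$ is a binary additive function of $S$, which makes $v(S) = \max_F |S \cap F|$ an $\XOS$ function, and the binary-marginals property transfers since adding one good to $S$ changes $|S \cap F|$ by at most one for every $F$.

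The main work lies in $(\mathrm{P}_1) \Rightarrow (\mathrm{P}_2)$. Fix $S \subseteq [m]$ and set $k \coloneqq v(S) \in \mathbb{Z}_{\geq 0}$ (integrality is immediate from binary marginals). The plan is a greedy-deletion argument: starting with $S' = S$, iteratively remove any good $g \in S'$ whose marginal is zero, i.e., for which $v(S' \setminus \{g\}) = v(S')$. Since only zero-marginal goods are deleted, $v(S')$ stays at $k$ throughout, and the process must terminate at some $S' \subseteq S$ for which $v(S' \setminus \{g\}) = k - 1$ for every $g \in S'$. Now, using $(\mathrm{P}_1)$, choose an additive component $\ell_{t^\star}$ in the $\XOS$ representation with $\ell_{t^\star}(S') = v(S') = k$. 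For any $g \in S'$, the chain
\[
k \;=\; \ell_{t^\star}(S') \;=\; \ell_{t^\star}(S' \setminus \{g\}) + \ell_{t^\star}(\{g\}) \;\leq\; v(S' \setminus \{g\}) + v(\{g\}) \;\leq\; (k-1) + 1 \;=\; k
\]
is forced to hold with equality throughout, pinning $\ell_{t^\star}(\{g\}) = 1$. Summing over $g \in S'$ gives $k = \ell_{t^\star}(S') = |S'|$, so $X \coloneqq S'$ is the desired non-wasteful subset with $v(X) = |X| = v(S)$.

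Finally, for $(\mathrm{P}_2) \Rightarrow (\mathrm{P}_3)$, for each $S \subseteq [m]$ fix a non-wasteful witness $X_S \subseteq S$ guaranteed by $(\mathrm{P}_2)$ and define the binary additive function $\ell_S(T) \coloneqq |T \cap X_S|$. The identity $v(T) = \max_S \ell_S(T)$ breaks into two parts: taking $S = T$ yields $\ell_T(T) = |X_T| = v(T)$, and conversely Proposition \ref{proposition:non-wasteful-subsets} implies that $T \cap X_S$ is non-wasteful, so $v(T) \geq v(T \cap X_S) = |T \cap X_S| = \ell_S(T)$ for every $S$. The main obstacle in the whole argument is thus the $(\mathrm{P}_1) \Rightarrow (\mathrm{P}_2)$ step, since a generic $\XOS$ representation can carry fractional coefficients a priori; the trimming-and-equality argument above is exactly what forces the relevant additive component $\ell_{t^\star}$ to be $\{0,1\}$-valued on $S'$ and, in turn, produces the integer cardinality $|S'| = v(S)$.
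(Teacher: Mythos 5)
Your proposal is correct and follows essentially the same route as the paper: the identical cycle of implications, the same greedy-deletion of zero-marginal goods for $(\mathrm{P}_1) \Rightarrow (\mathrm{P}_2)$ (you phrase the key step as a squeezed chain of inequalities rather than an explicit contradiction, but the content is identical), the same construction of binary additive functions from non-wasteful witness sets via Proposition~\ref{proposition:non-wasteful-subsets} for $(\mathrm{P}_2) \Rightarrow (\mathrm{P}_3)$, and the same support-based bookkeeping for $(\mathrm{P}_3) \Rightarrow (\mathrm{P}_4) \Rightarrow (\mathrm{P}_1)$.
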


\paragraph{Social welfare and Nash social welfare.} The social welfare $\SW(\cdot)$ of an allocation $\alloc = (A_1, \ldots ,A_n)$ is defined as the sum of the values that the agents derive from their bundles in $\alloc$, i.e., $\SW(\alloc) \coloneqq \sum_{i=1}^n v_i(A_i)$. 

The Nash social welfare $\NSW(\cdot)$ of an allocation $\alloc = (A_1,\ldots ,A_n)$ is defined as the geometric mean of the agents' values in $\alloc$, i.e., $\NSW(\alloc) \coloneqq \left( \prod_{i=1}^n v_i(A_i) \right)^{\frac{1}{n}}$. An allocation $\mathcal{N}=(N_1, \ldots, N_n)$ with the maximum possible Nash social welfare (among the set of all allocations) is referred to as a Nash optimal allocation.

Under binary $\XOS$ valuations, one can assume, without loss of generality, that welfare-maximizing allocations solely consist of non-wasteful bundles; the proof of this lemma is deferred to Appendix \ref{appendix:prelims}. 
\begin{restatable}{lemma}{PropNonWastefulAlloc}
\label{proposition:non-wasteful-allocation}
For any allocation $\mathcal{P}=(P_1,\ldots, P_n)$ among agents with binary $\XOS$ valuations, there exists an allocation $\mathcal{P}'= (P'_1,\ldots, P'_n)$ of non-wasteful bundles that has the same valuation profile as $\mathcal{P}$, i.e., $v_i(P'_i) = |P'_i| =v_i(P_i)$ for all agents $i \in [n]$. 
\end{restatable}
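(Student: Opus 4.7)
The plan is to apply, agent-by-agent, the characterization $(\mathrm{P}_2)$ of binary $\XOS$ functions established in Theorem~\ref{theorem:xos-definitions}. Recall that $(\mathrm{P}_2)$ guarantees, for any binary $\XOS$ valuation $v$ and any subset $S \subseteq [m]$, the existence of a subset $X \subseteq S$ satisfying $v(X) = |X| = v(S)$.

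Concretely, I would proceed as follows. For each agent $i \in [n]$, invoke $(\mathrm{P}_2)$ applied to $v_i$ and the bundle $P_i$ to obtain a subset $P'_i \subseteq P_i$ with the property that $v_i(P'_i) = |P'_i| = v_i(P_i)$. Define $\mathcal{P}' \coloneqq (P'_1, \ldots, P'_n)$. Two things then need to be checked: (a) $\mathcal{P}'$ is indeed a valid allocation, i.e., its bundles are pairwise disjoint; and (b) the valuation profile of $\mathcal{P}'$ matches that of $\mathcal{P}$, with the additional non-wastefulness guarantee. Property (a) follows immediately from the containment $P'_i \subseteq P_i$ for each $i$ together with the pairwise disjointness of the original bundles $P_1, \ldots, P_n$. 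Property (b) is exactly the conclusion furnished by $(\mathrm{P}_2)$: the chain $v_i(P'_i) = |P'_i| = v_i(P_i)$ simultaneously certifies non-wastefulness of $P'_i$ and equality of the per-agent value with the original allocation.

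There is essentially no serious obstacle here, since the work is done by the characterization theorem. The only minor point worth being careful about is making sure the subsets are chosen independently for each agent (using each agent's own valuation), so that the disjointness argument is simply an observation about subset containment rather than a more delicate re-packing argument. In particular, there is no need to invoke the (stronger) property $(\mathrm{P}_4)$ or to reason about which additive clause of an $\XOS$ representation is active; the bare statement $(\mathrm{P}_2)$ suffices to trim each $P_i$ to a non-wasteful core $P'_i$ of the same value.
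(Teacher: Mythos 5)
Your proof is correct and follows essentially the same route as the paper: both apply property $(\mathrm{P}_2)$ of Theorem~\ref{theorem:xos-definitions} agent-by-agent to shrink each $P_i$ to a non-wasteful subset $P'_i \subseteq P_i$ of equal value, with disjointness inherited from $P'_i \subseteq P_i$. Your write-up is, if anything, slightly more explicit about the disjointness check, which the paper leaves implicit.
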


\section{Approximation Algorithm for Nash Social Welfare}
\label{section:nsw}
Our algorithm (Algorithm \ref{algo:apx-mnw}) computes an allocation $\alloc = (A_1, \ldots, A_n)$ in which, for each agent $i$, the envy is multiplicatively bounded towards the entire set of goods, $G_i$, allocated to agents with bundle size at least four times that of $i$. Specifically, with respect to allocation $\alloc$, write $H_i$ to denote the set of agents who have received a bundle of size at least four times that of $i$, and let $G_i = \big( \cup_{j \in H_i}A_j \big) \cup A_0 \cup A_i$; recall that $A_0$ denotes the set of unassigned goods in $\alloc$. We show that, under binary $\XOS$ valuations, if $v_i(A_i) > \frac{1}{2} v_i(G_i)$ for all agents $i$, then $\alloc$ achieves a constant-factor approximation guarantee for Nash social welfare. 

Algorithm \ref{algo:apx-mnw} finds such an allocation by iteratively updating the agents' bundles. In particular, if for an agent $i$ the envy requirement is not met (i.e., we have $v_i(A_i) \leq \frac{1}{2} v_i(G_i)$), then the algorithm finds a non-wasteful subset $X \subset G_i$ with twice the value of $A_i$, i.e., finds a subset $X \subset G_i$ with the property that $v_i(X) = |X| = 2 v_i(A_i)$. The algorithm then assigns $X$ to agent $i$, and updates the remaining bundles accordingly. Note that, under binary $\XOS$ valuations, such a subset $X$ can be computed efficiently (in Line \ref{step:update-A-i} of the algorithm): one can initialize $X = G_i$ and iteratively remove goods from $X$ until the desired property is achieved; recall $(\mathrm{P}_2$) in Theorem \ref{theorem:xos-definitions}. Also, with these updates, the algorithm maintains the invariant that the bundles assigned to the agents are non-wasteful. Indeed, the value of agent $i$ doubles after receiving subset $X$, and we show that the algorithm necessarily finds the desired allocation after at most a polynomial number of such value increments, i.e., the algorithm runs in polynomial time (Lemma \ref{lemma:time-complexity}). 

\floatname{algorithm}{Algorithm}
\begin{algorithm}[ht]
\caption{\textsc{Alg}} \label{algo:apx-mnw}
\textbf{Input:} Fair division instance  $\langle [m],[n], \{v_i\}_i \rangle$ with value-oracle access to the binary $\XOS$ valuations $v_i$s  \\
\textbf{Output:} Allocation $\mathcal{A} = (A_1, \ldots, A_n)$

  \begin{algorithmic}[1]
  		\STATE Compute an allocation $\alloc \coloneqq (A_1,\ldots, A_n)$ with $v_i(A_i) = |A_i| = 1$, for every agent $i \in [n]$. \COMMENT{Such an allocation $\alloc$ can be computed by finding a perfect matching between the agents $i$ and the goods valued by $i$.} \label{step:initial-matching}
  		\STATE Initialize $A_0 = [m]\setminus (\cup_{j=1}^n A_j)$ 
  			\STATE For each agent $i \in [n]$, initialize subset of agents $H_i \coloneqq \{j \in [n] \ : \ |A_j| > 4|A_i|\}$ and subset of goods $G_i \coloneqq \big( \cup_{j \in H_i}A_j \big) \cup A_0 \cup A_i$ \label{step:definition-G-H}
		\WHILE{there exists agent $i \in [n]$ such that $v_i(A_i) \leq \frac{1}{2} v_i(G_i)$} \label{step:while-loop}
			\STATE Find subset $X \subseteq G_i$ with the property that $v_i(X) = |X| = 2v_i(A_i)$  \COMMENT{Such a non-wasteful subset $X$ can be computed efficiently for binary $\XOS$ valuations} \label{step:update-A-i}
			\STATE Set $A_i = X$, and update $A_j \leftarrow A_j \setminus X$ for each $j \in H_i$ \label{line:update-middle} \label{line:update-begin}
			\STATE Set $A_0 = [m] \setminus ( \cup_{j=1}^n A_j)$ \label{line:update-end}
			\STATE Set $H_k = \{j \in [n] \ : \ |A_j| > 4|A_k|\}$ and $G_k = \big( \cup_{j \in H_k}A_j \big) \cup A_0 \cup A_k$, for each agent $k \in [n]$
		\ENDWHILE
		\STATE \textbf{return } $\alloc  = \left( A_1,\ldots, A_n \right)$ \label{line:return}
		\end{algorithmic}
\end{algorithm}

Write $\mathcal{N} = (N_1, \ldots, N_n)$ to denote a Nash optimal allocation for the given fair division instance. We will throughout assume that the optimal Nash welfare is positive, $\NSW(\mathcal{N}) >0$. In the complementary case, wherein $\NSW(\mathcal{N}) = 0$, returning an arbitrary allocation suffices.\footnote{Here, in fact, one can also maximize the Nash social welfare subject to the constraint that the maximum possible number of agents receive a good: write $n'$ to denote the size of the maximum-cardinality matching between the agents $i$ and the goods valued by $i$, and introduce $(n-n')$ ``dummy'' goods, any nonempty subset of which gives unit value to any agent. Approximating Nash social welfare in this modified instance (with binary $\XOS$ valuations) addresses the constrained version of the problem.} Note that the assumption $\NSW(\mathcal{N})>0$ and the fact that the valuations have binary marginals ensure that, for each agent $i$, the bundle $N_i$ contains a unit valued (by $i$) good. Hence, in Line \ref{step:initial-matching} of the algorithm we are guaranteed to find a matching wherein each agent is assigned a good of value one. 

The following lemma establishes an interesting property of the allocation $\alloc=(A_1,\ldots, A_n)$ returned by our algorithm. In particular, the lemma shows that---for any integer $\alpha \in \mathbb{Z}_+$---at most $n/\alpha$ agents $i$ receive a bundle $A_i$ of value less than $\frac{1}{18 \alpha}$ times $v_i(N_i)$. That is, in allocation $\alloc$, for any $\alpha \in \mathbb{Z}_+$ the number of  $18 \alpha$-suboptimal agents is at most $n/\alpha$. We will establish the approximation ratio of Algorithm \ref{algo:apx-mnw} (in Theorem~\ref{theorem:approximation-ratio} below) by invoking the lemma with dyadic values of $\alpha$.  

 %This lemma (instantiated with $\alpha = 2$) implies that, under $\alloc$, at least half the agents obtain a bundle of value more than a constant ($1/36$) times their optimal. 

\begin{lemma}
\label{lemma:geometric-relationship}
Let $\alloc = (A_1,\ldots, A_n)$ be the allocation returned by Algorithm \ref{algo:apx-mnw}  (\textsc{Alg}) and $\mathcal{N} = (N_1,\ldots, N_n)$ be a Nash optimal allocation, with $\NSW(\mathcal{N})>0$. Also, for any integer $\alpha \in \mathbb{Z}_+$, let $X_\alpha \coloneqq \left\{ i \in [n] :  v_i(A_i) < \frac{1}{\const \alpha} v_i(N_i) \right\}$. Then, 
\begin{align*}
\left| X_\alpha \right| \leq \frac{n}{\alpha} 
\end{align*}
\end{lemma}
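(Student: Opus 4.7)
The plan is to exploit two properties of the allocation $\alloc = (A_1,\ldots,A_n)$ returned by Algorithm \ref{algo:apx-mnw}: (i) the termination condition of the while loop, which gives $v_i(G_i) < 2\, v_i(A_i)$ for every $i \in [n]$, and (ii) non-wastefulness, $v_i(A_i) = |A_i|$, which is maintained throughout the algorithm. Using Lemma \ref{proposition:non-wasteful-allocation}, I may assume, without loss of generality, that $\mathcal{N}$ is also non-wasteful, i.e., $v_i(N_i) = |N_i|$. Fix any $i \in X_\alpha$ and write $L_i \coloneqq [n] \setminus H_i = \{j \in [n] : |A_j| \leq 4 |A_i|\}$. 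Since $[m] = A_0 \sqcup \bigsqcup_{j} A_j$ and $G_i = A_0 \cup A_i \cup \bigcup_{j \in H_i} A_j$, we obtain the disjoint decomposition
\[
N_i \;=\; (N_i \cap G_i) \;\sqcup\; \bigsqcup_{j \in L_i \setminus \{i\}} (N_i \cap A_j).
\]

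The key step is to bound $|N_i \cap G_i|$ via the binary-$\XOS$ structure of $v_i$. By characterization $(\mathrm{P}_4)$ of Theorem \ref{theorem:xos-definitions}, the non-wastefulness of $N_i$ yields a set $F \in \mathcal{F}_i$ with $N_i \subseteq F$, whence $v_i(G_i) \geq |G_i \cap F| \geq |G_i \cap N_i|$. Combined with the termination condition, $|N_i \cap G_i| < 2 |A_i|$. Since $i \in X_\alpha$ gives $|N_i| > \const\,\alpha\, |A_i|$, the decomposition above forces
\[
\sum_{j \in L_i \setminus \{i\}} |N_i \cap A_j| \;>\; (\const\,\alpha - 2)\,|A_i|.
\]

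A weighted charging argument then closes the loop. The plan is to charge each good $g \in N_i \cap A_j$ (with $i \in X_\alpha$ and $j \in L_i \setminus \{i\}$) a weight of $1/|A_{\pi(g)}|$, where $\pi(g) = j$ denotes the owner of $g$ in $\alloc$. Since $\mathcal{N}$ is an allocation, each good lies in at most one $N_i$, so every good in $A_j$ is counted at most once; hence agent $j$ contributes total weight at most $|A_j| \cdot 1/|A_j| = 1$, and the grand total is at most $n$. On the other hand, for each $i \in X_\alpha$ every charged good satisfies $|A_{\pi(g)}| \leq 4 |A_i|$ (since $\pi(g) \in L_i$), so $i$ contributes weight at least $(\const\,\alpha - 2)\,|A_i| \cdot \frac{1}{4|A_i|} = (\const\,\alpha - 2)/4$. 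Summing, $|X_\alpha|\cdot (\const\,\alpha - 2)/4 \leq n$, hence $|X_\alpha| \leq 4n/(\const\,\alpha - 2) \leq n/\alpha$ for every integer $\alpha \geq 1$ (since $\const\,\alpha - 2 \geq 4\alpha$ whenever $\alpha \geq 1$).

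The main obstacle will be identifying the right weight to use in the charging. A uniform weight of $1$ yields only $|X_\alpha|(\const\,\alpha - 2) \leq m$, which is vacuous since $m$ can be arbitrarily large relative to $n$. The choice $1/|A_{\pi(g)}|$ is finely tuned to the bundle-ratio threshold of $4$ baked into the definition of $H_i$ in Algorithm \ref{algo:apx-mnw}: this threshold is exactly what guarantees $|A_{\pi(g)}| \leq 4|A_i|$ for every charged good, yielding the factor $1/4$ that the argument demands. Verifying the $(\mathrm{P}_4)$ inequality $v_i(G_i) \geq |G_i \cap N_i|$ is the other essential ingredient, and it crucially uses the \emph{binary}-marginals assumption---it would fail for general $\XOS$ valuations.
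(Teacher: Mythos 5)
Your proof is correct, and it takes a genuinely different accounting route from the paper's. Both arguments start identically: non-wastefulness of $\alloc$ and (WLOG) $\mathcal{N}$, the termination condition $v_i(G_i)<2v_i(A_i)$, and the conclusion $|N_i\cap G_i|<2|A_i|$ (the paper derives this via Proposition~\ref{proposition:non-wasteful-subsets} and monotonicity rather than your $(\mathrm{P}_4)$ argument, but both are valid and short). From here you diverge. The paper partitions $X_\alpha$ into dyadic size-classes $X^k_\alpha \coloneqq \{i\in X_\alpha : 4^k\le|A_i|<4^{k+1}\}$, bounds each $|D_k|\coloneqq|\bigcup_{i\in X^k_\alpha}(N_i\cap G_i^c)|\ge \alpha 4^{k+2}|X^k_\alpha|$, and then invokes a separate packing lemma (Proposition~\ref{prop:constrained-packing}) to lower-bound the number of agents covering $\bigcup_k D_k$. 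You instead run a single fractional charging: assign each good $g\in N_i\cap G_i^c$ the weight $1/|A_{\pi(g)}|\ge 1/(4|A_i|)$, observe that disjointness of the $N_i$'s and the trivial bound $\sum_j|A_j|/|A_j|\le n$ cap the total weight by $n$, while each $i\in X_\alpha$ contributes strictly more than $(\const\alpha-2)/4\ge 4\alpha$. This bypasses both the geometric bucketing and Proposition~\ref{prop:constrained-packing}, and (as your final inequality shows) in fact yields the marginally stronger bound $|X_\alpha|<4n/(18\alpha-2)$. It is a cleaner and more direct argument; the threshold $4$ in $H_i$ plays exactly the same structural role in both proofs, but you locate it in the per-good weight rather than in the granularity of a partition. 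Two cosmetic points: your "at least" should be "strictly more than" when passing from $|N_i\cap G_i^c|>(\const\alpha-2)|A_i|$ to the per-agent contribution, and the inequality you write as $|X_\alpha|(\const\alpha-2)/4\le n$ is in fact strict; neither affects the conclusion.
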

\begin{proof}
Throughout its execution \textsc{Alg} assigns a non-wasteful bundle to every agent (see Lines \ref{step:update-A-i} and \ref{line:update-begin}) and, hence, for the returned allocation $\alloc=(A_1, \ldots, A_n)$ we have $v_i(A_i) = |A_i|$, for all agents $i$. Also, we assume, without loss of generality, that the bundles in the Nash optimal allocation $\mathcal{N} = (N_1,\ldots, N_n)$ are non-wasteful; see Lemma \ref{proposition:non-wasteful-allocation}. 

Fix an integer $\alpha \in \mathbb{Z}_+$ and consider any agent $i \in X_\alpha$. Recall that $G_i$ contains the set of goods that (under allocation $\alloc$) are assigned among agents in $H_i \coloneqq \{j \in [n] \ : \ |A_j| > 4|A_i|\}$ . We begin by upper bounding the size of the intersection between $N_i$ and $G_i$;  in particular, this bound shows that, in $\alloc$, not too many goods from the optimal bundle $N_i$ can get assigned among agents in $H_i$. Towards this, note that the termination condition of the while-loop (Line ~\ref{step:while-loop}) ensures that the returned non-wasteful bundle $A_i$ satisfies $v_i(G_i) < 2v_i(A_i)=2|A_i|$. Therefore, using Proposition \ref{proposition:non-wasteful-subsets} and the fact that $N_i$ is non-wasteful we get 
\begin{align}
\label{ineq-1}
|N_i \cap G_i| & = v_i(N_i \cap G_i) \leq v_i(G_i) < 2|A_i| 
\end{align}

Also, since agent $i \in X_\alpha$, the cardinality of $N_i$ is more than $\const \alpha$ times that of $A_i$: $|N_i| = v_i(N_i) > \const \alpha \ v_i(A_i) = \const \alpha \  |A_i|$. This observation and inequality (\ref{ineq-1}) imply that $N_i$ has a sufficiently large intersection with $G_i^c \coloneqq [m] \setminus G_i$
\begin{align}
\label{ineq-3}
|N_i \cap G_i^c| = |N_i| - |N_i \cap G_i| > \const \alpha  |A_i| - 2|A_i| \geq \constMinTwo \alpha |A_i|
\end{align}
Indeed, $G_i^c$ is the set of goods that, in allocation $\alloc$, are assigned among the agents $j \in [n] \setminus (H_i \cup \{ i \})$, i.e., among the agents $j \neq i$ with bundles of value $v_j(A_j)  =  |A_j| \leq 4 |A_i|$. 

To establish the desired upper bound on the size of $X_\alpha$, we partition it into subsets. Specifically, for each $0 \leq k \leq \lfloor \log_4 m \rfloor$, define set 
\begin{align*}
X^{k}_\alpha \coloneqq \{i \in X_\alpha \ : \ 4^k \leq v_i(A_i) < 4^{k+1} \}.
\end{align*} 
That is, $X^{k}_\alpha$ is the set of agents for whom the ratio between assigned value and the optimal value is less than $\frac{1}{18\alpha}$ (i.e., $i \in X_\alpha$) and the assigned value is in the range $[4^k, 4^{k+1})$. We note that with $k$ between $0$ and $\lfloor \log_4 m \rfloor$, the subsets $X^{k}_\alpha$s, partition $X_\alpha$. In particular, initially in $\Alg$ (see Line \ref{step:initial-matching}) each agent achieves a value of one; recall the assumption that $\NSW(\mathcal{N}) >0$ and, hence, there exists a matching wherein each agent is assigned a nonzero valued good. Furthermore, during the execution of \textsc{Alg} the agents' valuations inductively continue to be at least one: consider any iteration of the while-loop and let $\widehat{i}$ be the agent that receives a new bundle $X$ in the iteration  (see Lines \ref{step:update-A-i} and \ref{line:update-begin}). The selection criterion of $X$ ensures that the valuation of $\widehat{i}$ in fact doubles. For any agent $ j \in H_{\widehat{i}}$, before the update in Line \ref{line:update-middle} we have $v_j(A_j) = |A_j| > 4 |A_{\widehat{i}}| = 2 |X|$ and, hence, even after the update ($A_j \leftarrow A_j \setminus X$) agent $j$'s value continues to be at least one. Finally, for each remaining agent (in the set $[n] \setminus \left( H_{\widehat{i}} \cup \{ \ \widehat{i} \ \} \right)$) its bundle remains unchanged. Hence, for the returned allocation $\alloc=(A_1, \ldots, A_n)$ we have $v_i(A_i) = |A_i| \geq 1$. Also, the fact that the marginals of the valuation $v_i$ are binary implies $v_i([m]) \leq m$, i.e., $v_i(A_i) \leq m$. Therefore, the bounds $1 \leq v_i(A_i) \leq m$ (for all agents $i$) imply that the subsets $X^{k}_\alpha$s, with $0 \leq k \leq \lfloor \log_4 m \rfloor$, partition $X_\alpha$; in particular, $\sum_{k=0}^{\lfloor \log_4 m \rfloor} |X^k_\alpha| = |X_\alpha|$. 

Furthermore, for each agent $i \in X^{k}_\alpha$ we have 
\begin{align*}
|N_i \cap G_i^c| & > \constMinTwo \alpha \ |A_i| \tag{via inequality (\ref{ineq-3})} \\
& \geq \constMinTwo \alpha  \  4^k \tag{since $i \in X^k_\alpha$} \\
& = \alpha \ 4^{k+2}.
\end{align*}
Therefore, for each $k$, the set of goods $D_k \coloneqq \bigcup_{i \in X^k_\alpha} (N_i \cap G_i^c)$ satisfies $|D_k| \geq \alpha 4^{k+2} \  |X^k_\alpha|$. That is, for each $k$, and whenever $X^k_\alpha \neq \emptyset$, the size of set $D_k$ is at least a positive integer multiple of $4^{k+2}$. Also, note that for each good $\overline{g} \in D_k$, we have (by definition of $D_k$) that $\overline{g} \in N_{\overline{i}} \cap G^c_{\overline{i}}$ for some $\overline{i} \in X^k_\alpha$.  These containments (and the definition of $G^c_{\overline{i}}$) ensure that $\overline{g} \in A_j$,\footnote{Recall that $A_0 \subseteq G_{\overline{i}}$ and, hence, $A_0 \cap G^c_{\overline{i}} = \emptyset$.} for some agent $j \in [n]$ with the property that $|A_j| \leq 4 |A_{\overline{i}}| < 4^{k+2}$. That is, for each $k$ (with $X^k_\alpha \neq \emptyset$), the cardinality of $D_k$ is a positive integer multiple of $4^{k+2}$ and (under $\alloc$) the goods in $D_k$ must be assigned to agents with bundles of size at most $4^{k+2}$. These two properties ensure that (in allocation $\alloc$) a sufficiently large number of bundles are necessarily required to cover the set of goods $\bigcup_k D_k = \bigcup_k \bigcup_{i \in X^k_\alpha} (N_i \cap G_i^c) = \bigcup_{i \in X_\alpha} (N_i \cap G_i^c)$. Specifically, write $t \in \mathbb{Z}_+$ to denote  the number of agents that have been assigned (under allocation $\alloc$) at least one good from $\bigcup_k D_k$ (i.e., $t\coloneqq \left| \left\{ j \in [n] : A_j \cap (\cup_k D_k)  \neq \emptyset \right\} \right|$), then Proposition \ref{prop:constrained-packing} (stated and proved in Appendix \ref{appendix:nsw}) gives us 

\begin{align*}
t \geq \sum_{k=0}^{\lfloor \log_4 m \rfloor} \frac{|D_k|}{4^{k+2}} \geq \sum_{k=0}^{\lfloor \log_4 m \rfloor} \frac{\alpha 4^{k+2} \cdot |X^k_\alpha|}{4^{k+2}} = \alpha \ \sum_{k=0}^{\lfloor \log_4 m \rfloor} |X^k_\alpha| = \alpha |X_\alpha|.
\end{align*}
However, the number of agents $t$ cannot be more than $n$. Hence, the stated claim follows $n \geq t \geq \alpha |X_\alpha|$.
\end{proof}

The allocation $\alloc = (A_1, A_2, \ldots, A_n)$ returned by Algorithm \ref{algo:apx-mnw} can be made complete by allocating the (unassigned) goods in $[m] \setminus \cup_{i=1}^n A_i$ arbitrarily. Doing this would not affect the approximation guarantee. \\

The following lemma establishes the time complexity of $\Alg$. 

\begin{restatable}{lemma}{timeComplexityLemma}
\label{lemma:time-complexity}
For any given fair division instance with $n$ agents, $m$ goods, and value-oracle access to the binary $\XOS$ valuations, Algorithm \ref{algo:apx-mnw} ($\Alg$) returns an allocation in time that is polynomial in $n$ and $m$. 
\end{restatable}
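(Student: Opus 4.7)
The plan is to establish Lemma \ref{lemma:time-complexity} in two pieces: bounding the work per iteration of the while-loop by a polynomial, and then bounding the number of iterations polynomially. For the per-iteration cost, I would handle the loop-condition check by evaluating $v_i(A_i)$ and $v_i(G_i)$ for each agent---$O(n)$ value queries in total. The substantive step is producing the set $X \subseteq G_i$ with $v_i(X) = |X| = 2 v_i(A_i)$ in Line \ref{step:update-A-i}. Using characterization $(\mathrm{P}_2)$ of Theorem \ref{theorem:xos-definitions}, I would first extract a non-wasteful subset $X_0 \subseteq G_i$ of value $v_i(G_i)$ by a single sweep through the goods of $G_i$, discarding any $g$ for which $v_i(G_i \setminus \{g\}) = v_i(G_i)$; that the surviving set $X_0$ is non-wasteful follows by combining binary marginals with $(\mathrm{P}_2)$. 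Since the loop's entry condition guarantees $|X_0| = v_i(G_i) \geq 2 v_i(A_i)$, I can delete arbitrary goods from $X_0$ until exactly $2 v_i(A_i)$ remain, and Proposition \ref{proposition:non-wasteful-subsets} preserves the non-wasteful invariant. This construction, together with the straightforward bookkeeping updates to $A_i$, $A_j$, $A_0$, $H_k$, and $G_k$, is $\mathrm{poly}(n, m)$.

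For the iteration count, I would introduce the product potential $\Phi \coloneqq \prod_{i \in [n]} v_i(A_i)$. The proof of Lemma \ref{lemma:geometric-relationship} already shows that $\Alg$ maintains non-wasteful bundles with $v_i(A_i) = |A_i| \geq 1$, and since the marginals are binary, $v_i(A_i) \leq m$. Thus $1 \leq \Phi \leq m^n$ throughout, with $\Phi = 1$ at initialization. The crux is showing that $\Phi$ strictly grows by a fixed multiplicative factor each iteration. Consider an iteration in which agent $i$, with $v_i(A_i) = a$, receives the new bundle $X$ of size $2a$, and for each $j \in H_i$ set $x_j \coloneqq |A_j \cap X|$. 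Since $|A_i|$ doubles, each $|A_j|$ for $j \in H_i$ decreases by $x_j$, and all other bundles are unaffected, we have
\begin{align*}
\frac{\Phi^{\text{new}}}{\Phi^{\text{old}}} \;=\; 2 \prod_{j \in H_i} \left( 1 - \frac{x_j}{|A_j|}\right).
\end{align*}
Using $|A_j| \geq 4a + 1$ for each $j \in H_i$ and $\sum_{j \in H_i} x_j \leq |X| = 2a$, I would apply the Weierstrass inequality $\prod_j (1 - y_j) \geq 1 - \sum_j y_j$ to conclude
\begin{align*}
\prod_{j \in H_i} \left( 1 - \frac{x_j}{|A_j|}\right) \;\geq\; 1 - \frac{2a}{4a+1} \;=\; \frac{2a+1}{4a+1},
\end{align*}
and therefore $\Phi^{\text{new}} / \Phi^{\text{old}} \geq 1 + \tfrac{1}{4a+1} \geq 1 + \tfrac{1}{4m+1}$. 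Taking logarithms, each iteration grows $\log \Phi$ by at least $\log\!\bigl(1 + \tfrac{1}{4m+1}\bigr) \geq \tfrac{1}{4m+2}$, while $\log \Phi$ is confined to $[0, n \log m]$; hence the while-loop runs for at most $O(nm \log m)$ iterations, and the lemma follows.

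I expect the main obstacle to be identifying the right potential. Additive quantities such as $\sum_i v_i(A_i)$ can strictly decrease in a single iteration (for instance, when $X$ is drawn entirely from bundles in $H_i$ and misses $A_i \cup A_0$); it is the multiplicative structure of $\Phi$, paired with the Weierstrass inequality, that yields the clean monotone invariant needed to polynomially bound the iteration count.
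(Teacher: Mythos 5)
Your proof is correct and takes a genuinely different, arguably cleaner route to the key multiplicative-progress bound than the paper, though both arrive at the same per-iteration factor $1 + \tfrac{1}{4m+1}$. The paper works with $\NSW(\alloc) = \Phi^{1/n}$ and bounds $\NSW(\alloc')$ from below via a surrogate vector $(\eta_i)_i$: starting from $(|A'_i|)_i$, it sequentially moves the lost mass $|A_j \cap X|$ from $\eta_y$ (with $y = \argmin_{j \in H_x}|A'_j|$) into $\eta_j$ for each $j \in H_x \setminus \{y\}$; each transfer raises a larger coordinate and lowers a smaller one and hence only shrinks the product, and at the end only coordinates $x$ and $y$ differ from $(|A_i|)_i$, reducing the comparison to a two-coordinate computation. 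You instead leave all the losses in place and control $\prod_{j \in H_i}\bigl(1 - x_j/|A_j|\bigr)$ directly with the Weierstrass inequality $\prod_j(1-y_j) \geq 1 - \sum_j y_j$, using $\sum_j x_j \leq |X| = 2a$ and $|A_j| \geq 4a+1$. Both routes collapse to the same algebra $2\bigl(1 - \tfrac{2a}{4a+1}\bigr) = 1 + \tfrac{1}{4a+1}$, but yours is shorter and avoids the $\eta$-vector bookkeeping; the paper's transfer trick retains per-bundle information that would help tighten constants, though that slack isn't exploited here. Your iteration count via $\Phi \in [1, m^n]$ is equivalent (up to a $\log$ factor) to the paper's via $\NSW \in [1, m/n]$, and both are polynomial. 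One small imprecision in your per-iteration cost analysis: the non-wasteful $X_0$ should be obtained by a sweep against the \emph{running} candidate set (remove $g$ only if its removal from the current set leaves the value unchanged), not by testing $v_i(G_i \setminus \{g\}) = v_i(G_i)$ against the fixed reference $G_i$ as your wording suggests---with the fixed reference, mutually substitutable goods could all be discarded simultaneously. The running-set version is correct, needs only monotonicity plus binary marginals, and is what the paper intends in Line~\ref{step:update-A-i}.
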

\begin{proof}
Initially, the Nash social welfare of allocation $\alloc=(A_1, \ldots, A_n)$ is one, since $v_i(A_i) = |A_i| = 1$, for all agents $i \in [n]$; see Line \ref{step:initial-matching} and recall the assumption that $\NSW(\mathcal{N})>0$. We will prove that in every while-loop iteration of $\Alg$, the Nash social welfare of $\alloc$ increases by at least a factor of $\left(1+\frac{1}{4m+1} \right)^{\frac{1}{n}}$. That is, after every $(4m+1)n$ iterations, the Nash social welfare of the maintained allocation $\alloc$ increases by at least a factor of two. Also, note that, under binary $\XOS$ valuations, the optimal Nash social welfare is at most $m/n$. Therefore, the algorithm necessarily terminates after $\mathcal{O}(mn \ \log (m/n))$ iterations. This establishes the polynomial-time complexity of $\Alg$. 

Hence, we complete the runtime analysis by showing that after every update the Nash social welfare increases by at least a factor of $\left( 1+\frac{1}{4m+1} \right)^{\frac{1}{n}}$. Towards this, fix a while-loop iteration, and let $\alloc = (A_1, A_2, \ldots, A_n)$ and $\alloc' = (A'_1, A'_2, \ldots, A'_n)$, respectively, denote the non-wasteful allocations before and after the update steps (Lines \ref{line:update-begin} and \ref{line:update-end}). Also, let $x$ be that agent that receives a new bundle $X$ in Line \ref{line:update-begin}. Note that $|A'_{x}| = |X| = 2 |A_{x}|$ and $A'_j = A_j \setminus X$ for all $j \in H_{x}$. 

To prove that $\frac{\NSW(\alloc')}{\NSW(\alloc)} \geq \left( 1+\frac{1}{4m+1} \right)^{\frac{1}{n}}$, we start with the valuation profile of $\alloc'$, obtain a vector $(\eta_1,\ldots ,\eta_n) \in \mathbb{Z}_+^{n}$  such that $\NSW(\alloc') \geq \left( \prod_{i=1}^n \eta_i \right)^{1/n}$, and show that even $\left( \prod_{i=1}^n \eta_i \right)^{1/n}$ is more than the desired factor times $\NSW(\alloc)$. 

In particular, write $y \coloneqq \argmin_{j \in H_x} |A'_j|$ and \emph{initialize} $\eta_i = |A'_i|$ for all $i \in [n]$. Then, iteratively for each agent $j \in H_x \setminus \{ y\}$, we update $\eta_j \leftarrow \eta_j + |A_j \cap X|$ and $\eta_y \leftarrow \eta_y - |A_j \cap X|$.\footnote{If $A_j \cap X = \emptyset$, then $\eta_j$ remains unchanged.} Note that $y = \argmin_{j \in H_x} |A'_j|$ and, hence, each update increments a higher component ($\eta_j$) and decrements a smaller one ($\eta_y$), by the same amount. Therefore, each update reduces $\left( \prod_{i=1}^n \eta_i \right)^{1/n}$ and maintains the invariant $\NSW(\alloc') \geq \left( \prod_{i=1}^n \eta_i \right)^{1/n}$.

Also, recall that, for all $j \in H_x$, we have $A'_j = A_j \setminus X$ (i.e., $|A'_j| = |A_j| - |A_j \cap X|$). Hence, the updates give us $\eta_j = |A_j|$ for all $j \in H_x \setminus \{ y\}$. Furthermore, after the updates, agents $x$ and $y$ satisfy: $\eta_x = |A'_x| = 2|A_x|$ and $\eta_y = |A'_y| - \left( \sum_{j \in H_x \setminus \{y\}} |A_j \cap X| \right) = |A_y| - \left( \sum_{j \in H_x} |A_j \cap X| \right) \geq |A_y| - |X|$. For each remaining agent $z \in [n] \setminus (H_x \cup \{ x \})$, the bundle remains unchanged in the loop, $A_z = A'_z$, and, hence, $\eta_z = |A'_z| = |A_z|$. Therefore, at the end, we have $\eta_j = |A_j|$ for all $j \in [n] \setminus \{x, y\}$, $\eta_x = 2 |A_x|$, and $\eta_y \geq |A_y|  - |X| = |A_y| - 2|A_x|$. Using these bounds on $\eta_j$s we obtain 
\begin{align}
\frac{\left( \prod_{i=1}^n \eta_i \right)^{\frac{1}{n}}}{\NSW(\alloc)} = \left( \prod_{i=1}^n \frac{\eta_i}{|A_i|} \right)^{\frac{1}{n}} = \left(  \frac{\eta_x \cdot \eta_y}{ |A_x| |A_y|} \right)^{\frac{1}{n}}  \geq \left( \frac{2|A_x|  \cdot \left( |A_y| - 2 |A_x|\right)}{|A_x| |A_y|} \right)^{\frac{1}{n}} = \left(  2 \ \left( 1-\frac{2|A_x|}{|A_y|}  \right)\right)^{\frac{1}{n}} \label{ineq:long-tail}
\end{align}

Since $y \in H_x$, we have $|A_y| \geq 4 |A_x| +1$. Extending inequality (\ref{ineq:long-tail}) with this bound on $|A_y|$ we get 
\begin{align*}
\frac{\left( \prod_{i=1}^n \eta_i \right)^{\frac{1}{n}}}{\NSW(\alloc)} \geq  \left(  2 \ \left( 1-\frac{2|A_x|}{|A_y|}  \right)\right)^{\frac{1}{n}} \geq  \left(  2 \ \left( 1-\frac{2|A_x|}{4|A_x| +1}  \right)\right)^{\frac{1}{n}} = \left(  1 + \frac{1}{4|A_x| + 1} \right)^{\frac{1}{n}} \geq \left(  1 + \frac{1}{4m + 1} \right)^{\frac{1}{n}}.
\end{align*}
The last inequality follows from the fact that $|A_x| \leq m$. 

As mentioned previously, $\NSW(\alloc') \geq \left( \prod_{i=1}^n \eta_i \right)^{1/n}$ and, hence, $\frac{\NSW(\alloc')}{\NSW(\alloc)} \geq \left( 1+\frac{1}{4m+1} \right)^{\frac{1}{n}}$. Overall, we get that, in each iteration, the Nash social welfare increases by at least a factor of $\left(1+\frac{1}{4m+1} \right)^{\frac{1}{n}}$, and this completes the proof.  
\end{proof}

The following theorem is our main result for Nash social welfare.
\begin{theorem}
\label{theorem:approximation-ratio}
For binary $\XOS$ valuations and in the value-oracle model, there exists a polynomial-time $\approxConst$-approximation algorithm for the Nash social welfare maximization problem.  
\end{theorem}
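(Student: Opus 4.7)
The plan is to combine Lemma~\ref{lemma:time-complexity} (which already gives the polynomial runtime of $\Alg$) with a dyadic averaging argument based on Lemma~\ref{lemma:geometric-relationship}. Let $\alloc = (A_1,\ldots,A_n)$ denote the allocation returned by $\Alg$ and $\mathcal{N} = (N_1,\ldots,N_n)$ a Nash optimal allocation. Assuming $\NSW(\mathcal{N}) > 0$ (the complementary case being trivial), set $r_i := v_i(N_i)/v_i(A_i)$ for each agent $i$, so that
\begin{align*}
\log \frac{\NSW(\mathcal{N})}{\NSW(\alloc)} \;=\; \frac{1}{n}\sum_{i=1}^n \log r_i.
\end{align*}
The task therefore reduces to showing $\sum_i \log r_i \le n \log \approxConst$.

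To exploit Lemma~\ref{lemma:geometric-relationship}, I would partition the agents according to which dyadic layer their ratio $r_i$ sits in: let $Y_0 := [n]\setminus X_1 = \{i : r_i \le \const\}$, and for each $k \ge 1$ let $Y_k := X_{2^{k-1}}\setminus X_{2^k}$, so that every $i \in Y_k$ satisfies $r_i \le \const \cdot 2^k$. Because the valuations have binary marginals with $v_i(A_i) \ge 1$ (as established inside the proof of Lemma~\ref{lemma:geometric-relationship}) and $v_i(N_i) \le m$, the ratios $r_i$ are bounded, so only finitely many $Y_k$ are non-empty and they genuinely partition $[n]$. Instantiating Lemma~\ref{lemma:geometric-relationship} with $\alpha = 2^{k-1} \in \mathbb{Z}_+$ gives $|Y_k| \le |X_{2^{k-1}}| \le n/2^{k-1}$ for each $k \ge 1$, while trivially $\sum_{k\ge 0} |Y_k| = n$.

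Combining these two ingredients,
\begin{align*}
\sum_{i=1}^n \log r_i \;\le\; \sum_{k\ge 0} |Y_k|\,\log(\const \cdot 2^k) \;=\; (\log \const)\sum_{k\ge 0} |Y_k| \;+\; (\log 2)\sum_{k\ge 1} k\,|Y_k| \;\le\; n\log \const \;+\; n\log 2 \,\sum_{k\ge 1} \frac{k}{2^{k-1}}.
\end{align*}
The standard identity $\sum_{k\ge 1} k/2^{k-1} = 4$ (obtained by differentiating the geometric series $\sum_{k\ge 0} x^k$ and evaluating at $x=1/2$) then yields $\sum_i \log r_i \le n(\log \const + 4\log 2) = n \log(\const \cdot 16) = n \log \approxConst$, which is exactly the advertised bound. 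Polynomial runtime then follows directly from Lemma~\ref{lemma:time-complexity}. The main obstacle here is modest: the heavy lifting is already done inside Lemma~\ref{lemma:geometric-relationship}, and the remaining delicacy is choosing the dyadic scale so that the per-band head counts $n/2^{k-1}$ and per-band log-values $\log \const + k\log 2$ combine into a convergent sum pinned precisely at $\approxConst = \const \cdot 16$; a coarser scale (e.g., $\alpha = 4^{k-1}$) or a finer one would change the constant but would not alter the overall strategy.
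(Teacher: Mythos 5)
Your proposal is correct and follows essentially the same route as the paper's proof: the sets $Y_k$ you define are the same dyadic bands (up to an index shift) as the paper's $Y_{2^d}$ and $Y'$, Lemma~\ref{lemma:geometric-relationship} is invoked in the same way, and your identity $\sum_{k\ge 1} k/2^{k-1}=4$ is the same computation as the paper's Proposition~\ref{prop:power-product}, just done in the log domain rather than as a product. The only differences are presentational.
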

\begin{proof}
Let $\alloc = (A_1,\ldots, A_n)$ be the (non-wasteful) allocation returned by $\Alg$, and $\mathcal{N} = (N_1,\ldots, N_n)$ be a (non-wasteful) Nash optimal allocation. As mentioned previously, under the condition that optimal Nash welfare $\NSW(\mathcal{N}) >0$, every agent necessarily receives a value of at least one in the allocation $\alloc$, i.e., $v_i(A_i) \geq 1$ for all $i$. In addition, $v_i(N_i) \leq v_i([m]) \leq m$; the last inequality follows from the fact the valuations have binary marginals. Hence, for all agents $i$, we have $v_i(A_i) \geq \frac{1}{m} v_i(N_i)$. 

Next, we partition the set of agents based on the ratio of their assigned value, $v_i(A_i)$, and their optimal value, $v_i(N_i)$. Specifically, define set $Y_{2^d} \coloneqq \left\{ i \in [n] \ : \frac{1}{2^{d+1}} \frac{v_i(N_i)}{18} \leq v_i(A_i) < \frac{1}{2^{d}} \frac{v_i(N_i)}{18} \right\}$, for each integer $d \in \{0,1,\ldots, \lceil \log  {m} \rceil \}$.  Since $v_i(A_i) \geq \frac{1}{m} v_i(N_i)$ for all $i$, the remaining agents $i' \in Y' \coloneqq [n] \setminus \left( \bigcup_{d=0}^{\lceil \log  {m} \rceil} \ Y_{2^d} \right)$ satisfy $v_{i'}(A_{i'}) \geq \frac{1}{18} v_{i'}(N_{i'})$. Indeed, the subsets $Y_{2^d}$s and $Y'$ partition the set of agents, and $|Y'| + \sum_{d=0}^{ \lceil \log  {m} \rceil } |Y_{2^d}| =n$. 

Note that, with $\alpha = 2^d$, we have $Y_\alpha \subseteq X_\alpha$; here, set $X_\alpha$ is defined as in Lemma \ref{lemma:geometric-relationship}. Hence, invoking this lemma we get 
\begin{equation}
\label{size-ineq}
|Y_{2^d}| \leq \frac{n}{2^d} \qquad \text{for all $0 \leq d \leq \lceil \log  {m} \rceil$}
\end{equation}

Write $\pi(S) \coloneqq \prod_{i \in S} \frac{v_i(A_i)}{v_i(N_i)}$, if the subset of agents $S \neq \emptyset$, and $1$ otherwise. Now, towards establishing the approximation ratio, consider
\begin{align*}
\frac{\NSW(\alloc)}{\NSW(\mathcal{N})} & = \left( \prod_{i=1}^n \frac{v_i(A_i)}{v_i(N_i)} \right)^{1/n} = \left( \pi(Y') \prod_{d=0}^{\lceil \log  {m} \rceil} \pi \left(Y_{2^d} \right) \right)^{1/n} \\
& \geq \left( \left( \frac{1}{18} \right)^{|Y'|} \prod_d \pi \left(Y_{2^d} \right) \right)^{1/n} \tag{since $v_i(A_i) \geq \frac{1}{18} v_i(N_i)$ for all $i \in Y'$} \\
& \geq \left( \left( \frac{1}{18} \right)^{|Y'|} \prod_d  \left( \frac{1}{18 \ 2^{d+1}} \right)^{|Y_{2^d}|}  \right)^{1/n} \tag{since $v_i(A_i) \geq \frac{1}{18 \ 2^{d+1} } v_i(N_i)$ for all $i \in Y_{2^d}$} \\
& = \frac{1}{18} \left( \prod_d  \left( \frac{1}{2^{d+1}} \right)^{|Y_{2^d}|}  \right)^{1/n} \tag{since $|Y'| + \sum_{k} |Y_{2^d}| =n$} \\
& = \frac{1}{18} \left( \prod_d  \left( \frac{1}{2^{d+1}} \right)^{|Y_{2^d}|/n}  \right) \\
& \geq \frac{1}{18} \left( \prod_d  \left( \frac{1}{2^{d+1}} \right)^{\frac{1}{2^d}}  \right) \tag{via inequality (\ref{size-ineq})}
\end{align*}

Proposition \ref{prop:power-product} (proved in Appendix \ref{appendix:nsw}) shows that the product $\prod_d  \left( \frac{1}{2^{d+1}} \right)^{\frac{1}{2^d}} \geq \frac{1}{16}$. Therefore, the stated approximation bound follows 
\begin{align*}
\frac{\NSW(\alloc)}{\NSW(\mathcal{N})}
\geq \frac{1}{\const} \prod_{d=0}^{\lceil \log  {m} \rceil}  \left( \frac{1}{2^{d+1}} \right)^{\frac{1}{2^d}}
\geq \frac{1}{\const} \cdot \frac{1}{16} = \frac{1}{288}.
\end{align*}
\end{proof}

\section{Social Welfare and Groupwise Maximin Share Guarantees}
This section shows that, for binary $\XOS$ valuations, Algorithm \ref{algo:apx-mnw} ($\Alg$) achieves constant-factor approximations for social welfare (Theorem \ref{theorem:sw}) and the groupwise maximin share guarantee (Theorem \ref{theorem:gmms}).

\begin{theorem}
\label{theorem:sw}
For any fair division instance with binary $\XOS$ valuations, Algorithm \ref{algo:apx-mnw} ($\Alg$) computes (in the value-oracle model) an allocation with social welfare at least $1/(3+2 \sqrt{2})$ times the optimal (social welfare). 
\end{theorem}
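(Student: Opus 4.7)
The plan is to compare $\SW(\mathcal{A})$, where $\mathcal{A}=(A_1,\ldots,A_n)$ is the allocation returned by $\Alg$, to the social welfare of an optimum allocation $\mathcal{O}=(O_1,\ldots,O_n)$. By Lemma~\ref{proposition:non-wasteful-allocation}, I may assume each $O_i$ is non-wasteful, so that $v_i(O_i)=|O_i|$. Recall from the analysis in Section~\ref{section:nsw} that $\Alg$ maintains the invariant $v_i(A_i)=|A_i|$ and terminates only when every agent $i$ satisfies the envy bound $v_i(G_i)<2\,v_i(A_i)=2|A_i|$, where $G_i=\bigl(\bigcup_{j\in H_i}A_j\bigr)\cup A_0\cup A_i$. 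The strategy is to split each $O_i$ as $O_i = (O_i\cap G_i)\sqcup(O_i\cap G_i^{c})$ and control the two parts separately.

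For the ``inside'' portion $O_i\cap G_i$, I would first invoke Proposition~\ref{proposition:non-wasteful-subsets}: since $O_i$ is non-wasteful (under $v_i$), so is its subset $O_i\cap G_i$, giving $v_i(O_i\cap G_i)=|O_i\cap G_i|$. Combining this identity with monotonicity of $v_i$ and the termination condition yields
\[
|O_i\cap G_i| \;=\; v_i(O_i\cap G_i) \;\leq\; v_i(G_i) \;<\; 2\,|A_i|.
\]
Summing across agents gives $\sum_i |O_i\cap G_i| < 2\,\SW(\mathcal{A})$.

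For the ``outside'' portion $O_i\cap G_i^{c}$, the point is that $G_i$ already contains $A_i$, $A_0$, and every $A_j$ with $j\in H_i$; hence every good in $O_i\cap G_i^{c}$ must be assigned in $\mathcal{A}$ to some agent $j\neq i$ with $|A_j|\leq 4|A_i|$. Crucially, the optimal bundles $\{O_i\}$ are pairwise disjoint and the algorithmic bundles $\{A_j\}$ are pairwise disjoint, so any assigned good contributes to the double sum $\sum_i |O_i\cap G_i^{c}|$ at most once; unassigned goods contribute zero because $A_0\subseteq G_i$ for every $i$. This yields
\[
\sum_{i=1}^{n}|O_i\cap G_i^{c}| \;\leq\; \Bigl|\bigcup_{j=1}^{n} A_j\Bigr| \;=\; \sum_{j=1}^{n}|A_j| \;=\; \SW(\mathcal{A}).
\]

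Combining the two bounds gives $\SW(\mathcal{O})=\sum_i |O_i|<2\,\SW(\mathcal{A})+\SW(\mathcal{A})=3\,\SW(\mathcal{A})$, which is in fact stronger than the claimed ratio $3+2\sqrt{2}$ and immediately implies $\SW(\mathcal{A})\geq\SW(\mathcal{O})/(3+2\sqrt{2})$. The main subtlety is in the inside bound: the envy condition gives information about $v_i(G_i)$ (a valuation), whereas the outside argument is inherently about cardinalities, so one must pass between the two using Proposition~\ref{proposition:non-wasteful-subsets} on the non-wasteful set $O_i$. The rest of the argument is then just disjointness bookkeeping, with the only nontrivial check being that unassigned goods sit inside every $G_i$ and therefore cannot be over-counted in the outside term.
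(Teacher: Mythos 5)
Your proof is correct, and it in fact yields a strictly stronger bound than the paper's: $\SW(\mathcal{O}) < 3\,\SW(\mathcal{A})$, which of course implies the claimed $3+2\sqrt{2}$ ratio. The paper takes a different route: it fixes a threshold and partitions agents into a ``high'' set $\mathcal{H} = \{i : v_i(A_i) \geq \tfrac{1}{2+\sqrt{2}} v_i(A^*_i)\}$ and its complement $\mathcal{L}$. For $\mathcal{H}$ it uses only the trivial per-agent bound $v_i(A^*_i) \leq (2+\sqrt{2})\,v_i(A_i)$, ignoring the envy structure entirely; for $\mathcal{L}$ it uses subadditivity (rather than your cardinality count) to show that the outside part dominates, $v_i(A^*_i) < (1+\sqrt{2})\,|G_i^c \cap A^*_i|$, and then applies the same disjointness charge you do, but only over $\mathcal{L}$. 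Optimizing the threshold $\beta$ in $\beta + \tfrac{\beta}{\beta-2}$ gives $3+2\sqrt{2}$. Your argument avoids the threshold and instead decomposes \emph{every} $O_i = (O_i \cap G_i) \sqcup (O_i \cap G_i^c)$, charging the inside piece to $2|A_i|$ via the termination condition and Proposition~\ref{proposition:non-wasteful-subsets}, and charging the outside pieces collectively to $\sum_j |A_j|$ via disjointness of $\{O_i\}$ and the containment $A_0 \subseteq G_i$. Because you exploit both the inside and outside information for all agents rather than one or the other per agent, the sum $2\,\SW(\mathcal{A}) + \SW(\mathcal{A}) = 3\,\SW(\mathcal{A})$ falls out with no optimization. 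The ingredients are the same (non-wastefulness of $O_i$ and its subsets, monotonicity, the while-loop exit condition $v_i(G_i) < 2v_i(A_i)$, and disjointness of the outside sets), but your bookkeeping is tighter and simpler, and the subtlety you flag --- that unassigned goods never appear in the outside sum because $A_0 \subseteq G_i$ --- is indeed the only place where care is needed.
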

\begin{proof}
Let $\alloc^* = (A^*_1,\ldots, A^*_n)$ be a social welfare maximizing allocation. We can assume, without loss of generality, that $\alloc^*$ is composed of non-wasteful bundles (Lemma \ref{proposition:non-wasteful-allocation}). Also, let $\alloc = (A_1,\ldots, A_n)$ be the non-wasteful allocation returned by $\Alg$.

We partition the set of agents into two sets: $\mathcal{H} \coloneqq \{ i \in [n] : v_i(A_i) \geq \frac{1}{2+\sqrt{2}} v_i(A^*_i) \}$ and $\mathcal{L} \coloneqq \{ i \in [n] : v_i(A_i) < \frac{1}{2+\sqrt{2}} v_i(A^*_i) \}$. Note that $\mathcal{H} \cup \mathcal{L} = [n]$. We establish the stated approximation guarantee by analyzing the social welfare of agents in $\mathcal{H}$ and $\mathcal{L}$ separately. 

For each agent $i \in \mathcal{H}$, by definition, we have $v_i(A^*_i) \leq (2+\sqrt{2}) \ v_i(A_i)$. Therefore, 
\begin{align}
\sum_{i \in \mathcal{H}} v_i(A^*_i) \leq (2+\sqrt{2}) \ \sum_{i \in \mathcal{H}} v_i(A_i) \leq (2+\sqrt{2}) \ \sum_{i \in [n]} v_i(A_i) \label{ineq:welfare-of-h}
\end{align}
We will now establish a similar inequality for the optimal social welfare of agents in $\mathcal{L}$. The while-loop condition of $\Alg$ ensures that, at termination, $v_i(G_i) < 2 v_i(A_i)$, for each agent $i \in [n]$. Hence, for an agent $i \in \mathcal{L}$, we have $v_i(G_i) < 2 v_i(A_i) < \frac{2}{2+\sqrt{2}} v_i(A^*_i)$. Furthermore, the monotonicity of $v_i$ gives us $v_i(G_i \cap A^*_i) < \frac{2}{2+\sqrt{2}} v_i(A^*_i) $ for each $i \in \mathcal{L}$.  

Write $G_i^c \coloneqq [m] \setminus G_i$ and note that the last inequality implies $v_i(G_i^c \cap A^*_i) > \frac{\sqrt{2}}{2+\sqrt{2}} v_i(A^*_i)$, for each $i \in \mathcal{L}$; otherwise, we would have $v_i(G_i^c \cap A^*_i) + v_i(G_i \cap A^*_i) < v_i(A^*_i)$, which contradicts the fact that $v_i$ is subadditive (in fact, $\XOS$). Hence, for each $i \in \mathcal{L}$, 
\begin{align}
v_i(A^*_i) < (1+\sqrt{2}) \ v_i(G_i^c \cap A^*_i) = (1+\sqrt{2}) \ |G_i^c \cap A^*_i| \label{ineq:interim-sw}
\end{align}
For the last equality, recall that $A^*_i$ is non-wasteful and, hence, its subsets are non-wasteful as well (Proposition \ref{proposition:non-wasteful-subsets}). Summing inequality (\ref{ineq:interim-sw}) over the agents $i \in \mathcal{L}$, we get 
\begin{align}
\label{ineq:penultimate-welfare-of-l}
\sum_{i \in \mathcal{L}} v_i(A^*_i) < (1+\sqrt{2}) \ \sum_{i \in \mathcal{L}} |G_i^c \cap A^*_i|
\end{align}
To simplify equation (\ref{ineq:penultimate-welfare-of-l}), we use the following two observations. First, the sets $(G_i^c \cap A^*_i)$s are pairwise disjoint across the agents. Second, each good $\widetilde{g} \in (G_i^c \cap A^*_i) \subseteq G_i^c =[m] \setminus G_i$ is allocated to some agent $j \in [n]$ under the allocation $\alloc$. This follows from the fact that $\widetilde{g} \notin A_0 \subseteq G_i$. Therefore, the set $\cup_{i \in \mathcal{L}} (G_i^c \cap A^*_i)$ must be contained in $\cup_{i \in [n]} A_i$, i.e., $\sum_{i \in \mathcal{L}}  |G_i^c \cap A^*_i| \leq \sum_{i \in [n]} |A_i|$. This inequality and equation (\ref{ineq:penultimate-welfare-of-l}) give us 
\begin{align}
\label{ineq:welfare-of-l}
\sum_{i \in \mathcal{L}} v_i(A^*_i) < (1+\sqrt{2}) \ \sum_{i \in \mathcal{L}}  |G_i^c \cap A^*_i| \leq (1+\sqrt{2}) \ \sum_{i \in [n]} |A_i| = (1+\sqrt{2}) \ \sum_{i \in [n]} v_i(A_i)
\end{align}
Finally, summing inequalities (\ref{ineq:welfare-of-h}) and (\ref{ineq:welfare-of-l}), we obtain the stated approximation bound
\begin{align*}
(3+2\sqrt{2})  \ \sum_{i \in [n]} v_i(A_i) & > \sum_{i \in \mathcal{H} \cup \mathcal{L}} v_i(A^*_i) = \sum_{i \in [n]} v_i(A^*_i).
\end{align*}
\end{proof}

\subsection{Groupwise Maximin Shares under Binary XOS valuations}
\label{section:gmms}

This section shows that the allocations computed by $\Alg$ achieve an approximate maximin share guarantee among all subgroup of agents. To formally specify this fairness guarantee we first define restricted maximin share for fair division instances $\langle [m], [n], \{v_i \}_{i=1}^n \rangle$: given parameter $r \in \mathbb{Z}_+$, subset of goods $S \subseteq [m]$, and agent $i \in [n]$, write $\mu_i^r(S) \coloneqq \max_{(P_1, \ldots, P_r)} \ \min_{1 \leq j \leq r} \ v_i(P_j)$; here, the $\max$ is considered over all the $r$-partitions of $S$. 

Note that $\mu_i^n([m])$ corresponds to the maximin share ($\MMS$) of agent $i$, and an allocation $(A_1, \ldots, A_n)$ is said to be $\alpha$-$\MMS$ iff $v_i (A_i) \geq \alpha \ \mu_i^n([m])$ for all agents $i \in [n]$. Strengthening this fairness notion, the groupwise maximin share guarantee requires that the (restricted) maximin share is achieved among all subsets of agents. Specifically, for an allocation $\alloc=(A_1, \ldots, A_n)$ and each agent $i \in [n]$, write 
\begin{align*}
\GMMS_i (\alloc) \coloneqq \max_{R \subseteq [n]: R \ni i} \ \mu_i^{|R|} \left(\bigcup_{j \in R} A_j \  \cup A_0\right).
\end{align*}
Here, $A_0$ is the set of unassigned goods, $ A_0 = [m] \setminus \left( \cup_{j \in [n]} A_j \right)$. Observe that, in contrast to the maximin share, $\mu_i^n([m])$, the threshold $\GMMS_i(\cdot)$ depends on the allocation at hand. 
\begin{definition}[Approximate $\GMMS$ allocation]
An allocation $\alloc=(A_1, \ldots, A_n)$ is said to be $\alpha$-$\GMMS$ iff $v_i(A_i) \geq \alpha\  \GMMS_i ( \alloc)$ for all agents $i \in [n]$.  
\end{definition}

Indeed, if an allocation $\alloc$ is $\alpha$-$\GMMS$, then for every subset of agents $R \subseteq [n]$, each agent $i \in R$ approximately achieves the maximin share obtained by solely considering agents in $R$ and the goods assigned to them, along with $A_0$. Here, including the unallocated goods $A_0$ in the definition of $\GMMS_i(\alloc)$ ensures that this threshold can be meaningfully applied in the context of partial allocations as well (i.e., with $A_0 \neq \emptyset$).  In fact, the definition ensures that given any partial, $\alpha$-$\GMMS$ allocation, one can arbitrarily assign $A_0$ among the agents to obtain a complete allocation that continues to be $\alpha$-$\GMMS$.  

The next result establishes that the allocation computed by Algorithm \ref{algo:apx-mnw} ($\Alg$) achieves the $\GMMS$ guarantee with $\alpha = 1/6$. 

\begin{theorem}
\label{theorem:gmms}
Given any given fair division instance with binary XOS valuations, $\Alg$ returns an allocation that is $\frac{1}{6}$-$\GMMS$.
\end{theorem}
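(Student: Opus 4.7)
The plan is to fix an agent $i \in [n]$ together with an arbitrary subset $R \subseteq [n]$ containing $i$, set $r \coloneqq |R|$, and let $T \coloneqq \bigl( \bigcup_{j \in R} A_j \bigr) \cup A_0$. I will establish the sharper inequality $\mu_i^{r}(T) < 6\, v_i(A_i)$; since this holds uniformly over all such $R$, it will imply $v_i(A_i) > \frac{1}{6} \GMMS_i(\alloc)$, giving the theorem. Throughout, I will use the two invariants of $\Alg$ that are central to the analysis: the returned allocation is non-wasteful (so $v_i(A_i) = |A_i|$ for every $i$), and the while-loop's termination condition yields $v_i(G_i) < 2\,v_i(A_i)$ for every $i$.

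To prove $\mu_i^{r}(T) < 6|A_i|$, I argue by contradiction: suppose there exists an $r$-partition $(P_1, \ldots, P_r)$ of $T$ with $v_i(P_j) \geq 6|A_i|$ for every $j$. Because $v_i$ is $\XOS$, hence subadditive, $v_i(P_j) \leq v_i(P_j \cap G_i) + v_i(P_j \setminus G_i)$, and the termination bound yields $v_i(P_j \cap G_i) \leq v_i(G_i) < 2|A_i|$. Combining these gives $v_i(P_j \setminus G_i) > 4|A_i|$, and since the binary marginals property forces $v_i(S) \leq |S|$ for every $S$, it follows that $|P_j \setminus G_i| > 4|A_i|$. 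Summing over $j$ produces $|T \setminus G_i| = \sum_{j=1}^r |P_j \setminus G_i| > 4|A_i|\, r$.

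To derive the contradiction, I will show that the left-hand side is too small. Using $G_i = \bigl( \bigcup_{j \in H_i} A_j \bigr) \cup A_0 \cup A_i$ together with the disjointness of the bundles, $T \setminus G_i$ is exactly $\bigcup_{k \in R \setminus (H_i \cup \{i\})} A_k$; every such $k$ lies outside $H_i$, so $|A_k| \leq 4|A_i|$, and there are at most $r-1$ such indices, yielding $|T \setminus G_i| \leq 4|A_i|(r-1) < 4|A_i|\,r$ and contradicting the previous bound. The main obstacle is setting up the two-piece decomposition of $T$ with the correct thresholds: the factor $2$ coming from the termination condition and the factor $4$ coming from the definition of $H_i$ must align so that the surplus value of $6|A_i|$ can be split as $2|A_i|$ inside $G_i$ and the remaining $4|A_i|$ outside $G_i$, where the cardinality constraint $|A_k| \leq 4|A_i|$ then forces the contradiction via a clean pigeonhole.
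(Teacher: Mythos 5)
Your proof is correct and follows essentially the same route as the paper: split each maximin piece into its $G_i$-part (bounded by the termination condition $v_i(G_i) < 2|A_i|$) and its $G_i^c$-part (forced to have cardinality exceeding $4|A_i|$), then derive a contradiction from the cap $|A_j| \leq 4|A_i|$ for $j \notin H_i$. The one small simplification is that by using subadditivity together with the binary-marginals bound $v_i(S) \leq |S|$ you work directly with an arbitrary maximin partition, avoiding the paper's preliminary step of replacing the partition by non-wasteful pieces via property $(\mathrm{P}_2)$ of Theorem~\ref{theorem:xos-definitions}.
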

\begin{proof}
Assume, towards a contradiction, that the non-wasteful allocation returned by the algorithm, $\alloc = (A_1,\ldots, A_n)$ is not $\frac{1}{6}$-$\GMMS$. That is, there exists a subset of agents $R \subseteq [n]$ and an agent ${i} \in R$ for which, $v_i(A_i) <  \frac{1}{6} \ \mu_i^{|R|} \left(\bigcup_{j \in R} A_j \  \cup A_0\right)$. Write $r \coloneqq |R|$. 

From the definition of the restricted maximin share $\mu_i^r \left( \bigcup_{j \in R} A_j \  \cup A_0\right)$, we know what there exists an $r$-partition $(B_1, B_2, \ldots, B_r)$ of the set of goods $\left( \bigcup_{j \in R} A_j \  \cup A_0\right)$ such that, for each $k \in [r]$,  we have $v_i(B_k) = |B_k| \geq \mu_i^r \left(\bigcup_{j \in R} A_j \  \cup A_0\right)$. Hence, $v_i(A_i) < \frac{1}{6} |B_k|$ for each $1 \leq k \leq r$. Since the allocation $\alloc$ is non-wasteful, the previous inequality reduces to 
\begin{align}
\label{ineq-gmms-1}
6 \ |A_i| < |B_k| \qquad \text{ for all $1 \leq k \leq r$} 
\end{align}

For the agent $i$ under consideration, we have $G_i = \big( \cup_{j \in H_i}A_j \big) \cup A_0 \cup A_i$ (see $\Alg$) and $v_i(G_i) \leq 2v_i(A_i) = 2 |A_i|$. 
Furthermore, the fact that, for each $1\leq k \leq r$, the subset $B_k$ is non-wasteful (with respect to $v_i$) gives us 
\begin{align}
|B_k \cap G_i| & = v_i(B_k \cap G_i)  \tag{via Proposition \ref{proposition:non-wasteful-subsets}} \nonumber \\
& \leq v_i(G_i) \tag{since $v_i$ is monotonic} \nonumber \\
& \leq 2 |A_i| \label{ineq:half-inter}
\end{align}
Write $G^c_i \coloneqq [m] \setminus G_i$ and note that (in allocation $\alloc$) each good in $G^c_i$ is assigned to some agent $j \in H^c_i \coloneqq [n] \setminus H_i$; this follows from the fact that $G_i = \big( \cup_{j \in H_i}A_j \big) \cup A_0 \cup A_i$. Furthermore, note that the subsets $B_k$s partition $\bigcup_{j \in R} A_j \  \cup A_0$. These two observations ensure that each good $\widetilde{g} \in G^c_i \cap B_k$ is contained in some bundle $A_j$ with $j \in H^c_i \cap R$. Hence, the pairwise disjoints subsets $\left\{ G^c_i \cap B_k \right\}_{k=1}^r$ are contained in $\cup_{j \in \left( H^c_i \cap R \right)} A_j$, and we have 
\begin{align}
\sum_{k=1}^r |G^c_i \cap B_k| &  \leq \sum_{j \in H^c_i \cap R} |A_j| \nonumber \\
& \leq \sum_{j \in H^c_i \cap R} 4 |A_i| \tag{since $|A_j| \leq 4 |A_i|$ for each $j \in H^c_i$} \nonumber \\
& \leq 4 r |A_i| \label{ineq:gmms-contra}
\end{align}

On the other hand, for each $1 \leq k \leq r$, we have the following lower bound 
\begin{align*}
|G^c_i \cap B_k| & = |B_k| - |B_k \cap G_i| \\ 
& \geq |B_k| - 2 |A_i| \tag{via inequality (\ref{ineq:half-inter})} \\ 
& > 4 |A_i| \tag{via inequality (\ref{ineq-gmms-1})} 
\end{align*}
Summing the last inequality, over $k \in [r]$, gives us $\sum_{k=1}^r |G^c_i \cap B_k|  > 4r |A_i|$. This contradicts equation (\ref{ineq:gmms-contra}) and, hence, shows that for every subset $R$ and agent $i \in R$ the restricted maximin share guarantee is achieved at least with a factor of $1/6$. 
\end{proof}

\section{Hardness of Approximation for Binary $\XOS$ Valuations}
This section establishes the $\mathrm{APX}$-hardness of maximizing Nash social welfare in fair division instances with binary $\XOS$ valuations. 
This inapproximability holds even if the agents' (binary $\XOS$) valuations are identical and admit a succinct representation. We obtain the hardness result by developing an approximation preserving reduction from the following gap version of the independent set problem in $3$-regular graphs. 

\begin{theorem}[\cite{chlebik2003hardness}]
\label{theorem:max-3-is-gap-hardness}
Given a $3$-regular graph $\mathcal{G}$ and a threshold $\tau$, it is $\mathrm{NP}$-hard to distinguish between 
\begin{itemize}
\item {\rm YES} Instances: The size of the maximum independent set in $\mathcal{G}$ is at least $\tau$. 
\item {\rm NO} Instances: The size of the maximum independent set in $\mathcal{G}$ is at most $\frac{94}{95} \tau$.
\end{itemize} 
\end{theorem}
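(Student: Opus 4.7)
The plan is to derive this explicit gap-hardness by composing three ingredients: a strong PCP-based inapproximability for a bounded-occurrence constraint satisfaction problem (CSP), a gap-preserving reduction from Max-SAT/Max-CSP to Max-Independent-Set on a bounded-degree graph, and finally a gadget-based regularization step that forces the graph to be exactly $3$-regular while losing essentially nothing in the gap.

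First, I would start from a tight hardness result for a bounded-occurrence Max-CSP, for example a version of Max-E3-SAT in which each variable occurs a small constant number of times (or Max-E3-LIN over $\mathbb{F}_2$ with bounded occurrences). Håstad-type PCP constructions, combined with Feige's expander-based amplification for bounded occurrences, supply an explicit constant-gap inapproximability for such problems. The specific starting CSP and its exact gap are chosen so that the final composed ratio lands at $94/95$.

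Second, I would apply the classical clause-vertex reduction from Max-CSP to Max-IS: for each clause introduce a clique of vertices, one per partial satisfying assignment, and connect two vertices from different clauses by an edge whenever their assignments disagree on a shared variable. The independence number of the resulting graph equals the optimum of the CSP, and, crucially, since the starting instance has bounded occurrences and constant clause width, the graph has bounded degree (though typically not yet $3$). This step already gives a gap-preserving reduction to Max-IS on bounded-degree graphs.

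Third, to enforce exact $3$-regularity, I would apply degree-adjustment gadgets. The standard recipe is to replace each vertex $v$ of degree $d > 3$ by a "chain" or small expander-like $3$-regular gadget with $d$ designated external half-edges (one per neighbor of $v$); the gadget is engineered so that every maximum IS of the gadget contributes a fixed, known number of vertices, and the two "states" of gadget configuration correspond exactly to "$v$ in" versus "$v$ out" of the original IS. Symmetric small gadgets are attached to any vertices of degree less than $3$ to bring them up to $3$. Each gadget contributes an additive constant to the threshold $\tau$ that one can simply subtract on both the YES and NO side, so the multiplicative gap is preserved.

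The main obstacle is the quantitative bookkeeping needed to land on precisely $94/95$. One has to (i) choose the starting CSP hardness with exactly the right explicit inapproximability constant, (ii) track the additive and multiplicative distortion of the clause-vertex step (which depends on the clause width and occurrence bound), and (iii) bound the additive shift contributed by the degree-regularization gadgets. The delicate part is optimizing the gadgets and the amplification parameters simultaneously so the constants compose into $94/95$; obtaining this specific number follows the customized gap-amplification and gadget analysis carried out in Chlebík and Chlebíková's work.
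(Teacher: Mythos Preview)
The paper does not prove this theorem at all; it is quoted verbatim as a known result from Chleb\'{i}k and Chleb\'{i}kov\'{a} \cite{chlebik2003hardness} and is used as a black box in the reduction establishing Theorem~\ref{theorem:apx-hard}. So there is no ``paper's own proof'' to compare your proposal against.

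Your sketch is a reasonable high-level outline of the kind of argument that underlies such explicit inapproximability constants for Max-IS on $3$-regular graphs: start from a H{\aa}stad/Feige-style bounded-occurrence CSP hardness, pass through the standard clause-gadget reduction to bounded-degree Max-IS, and then regularize degrees via local gadgets. That said, the proposal remains a sketch rather than a proof: the entire content of the Chleb\'{i}k--Chleb\'{i}kov\'{a} result lies precisely in the careful choice of starting CSP, the specific gadget constructions, and the quantitative bookkeeping that pins down the constant $94/95$, none of which you actually carry out. For the purposes of this paper, simply citing \cite{chlebik2003hardness} is the intended and sufficient treatment.
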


Note that in the theorem above, for the given graph, either the maximum independent set is of size at least $\tau$ or it is at most $\frac{94}{95} \tau$, i.e., here we have a promise problem. Our hardness result is established next.

\begin{theorem}
\label{theorem:apx-hard}
For fair division instances with (identical) binary $\XOS$ valuations, it is $\mathrm{NP}$-hard to approximate the maximum Nash social welfare within a factor of $1.0042$.
\end{theorem}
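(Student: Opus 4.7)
The plan is to construct a polynomial-time, approximation-preserving reduction from the gap $3$-regular independent-set problem of Theorem~\ref{theorem:max-3-is-gap-hardness}. Given a cubic graph $\mathcal{G}=(V,E)$ on $N$ vertices with threshold $\tau$, I will produce a fair-division instance $\mathcal{I}(\mathcal{G},\tau)$ with $n$ identical agents---where $n$ is calibrated as a function of $N$ and $\tau$, possibly on a polynomial-size vertex blow-up of $\mathcal{G}$---and common valuation
\begin{equation*}
v(S)\;:=\;\alpha\bigl(\mathcal{G}[S]\bigr),
\end{equation*}
the independence number of the induced subgraph on $S$. By characterization~$(\mathrm{P}_4)$ of Theorem~\ref{theorem:xos-definitions} applied to $\mathcal{F}$ defined as the family of independent sets of $\mathcal{G}$, the function $v$ is binary $\XOS$ and admits the succinct adjacency-list representation of $\mathcal{G}$, so the value oracle runs in polynomial time.

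Under identical valuations, Nash welfare maximization reduces cleanly to a geometric-mean packing of independent sets. A witness set $I_i\subseteq A_i$ realizing $\alpha(\mathcal{G}[A_i])$ is itself an independent set of $\mathcal{G}$, and distinct $I_i$'s are pairwise disjoint, hence
\begin{equation*}
\NSW^{*}(\mathcal{I})\;=\;\max\Bigl\{\,\Bigl(\prod_{i=1}^{n}|I_i|\Bigr)^{1/n}\,:\, I_1,\ldots,I_n\subseteq V\ \text{pairwise disjoint independent sets of }\mathcal{G}\,\Bigr\}.
\end{equation*}
The constraints $|I_i|\leq\alpha(\mathcal{G})$ and $\sum_i|I_i|\leq N$, together with the AM-GM inequality, yield the upper bound $\NSW^{*}(\mathcal{I})\leq\min\bigl(\alpha(\mathcal{G}),\,N/n\bigr)$. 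Choosing $n\approx N/\tau\in\{2,3,4\}$ (using $\tau\geq N/4$ for cubic graphs) aligns these two bounds near $\alpha(\mathcal{G})\approx\tau$.

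In the YES case ($\alpha(\mathcal{G})\geq\tau$), I will exhibit a matching lower bound by combining a maximum independent set $I^{*}$ with a Brooks-theorem $3$-coloring of $\mathcal{G}\setminus I^{*}$: redistribute $I^{*}$ and the three residual color classes across the $n$ bundles so each bundle becomes an independent set of size $(1-o(1))\tau$, and then pass to a polynomial-size vertex blow-up of $\mathcal{G}$ to absorb integrality losses via a Hajnal--Szemer\'edi equitable coloring. This gives $\NSW^{*}(\mathcal{I})\geq(1-o(1))\tau$. In the NO case ($\alpha(\mathcal{G})\leq\tfrac{94}{95}\tau$), the uniform cap $|I_i|\leq\alpha(\mathcal{G})$ together with the reformulation immediately gives $\NSW^{*}(\mathcal{I})\leq\tfrac{94}{95}\tau$. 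The YES/NO NSW gap is therefore at least $(1-o(1))\cdot\tfrac{95}{94}>1.0042$, so any $1.0042$-approximation to NSW would decide the IS-gap problem, yielding the claimed $\mathrm{NP}$-hardness.

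The main obstacle I foresee is the YES-case achievability step: showing that disjoint IS packings actually saturate the $\min(\alpha,\,N/n)$ upper bound up to a $(1-o(1))$ factor for every cubic graph---not only for those with a naturally balanced chromatic decomposition. I plan to handle this by working on a polynomial-size vertex blow-up in which Hajnal--Szemer\'edi equitable colorings yield exact class sizes, and, if needed, by a local rebalancing step that shifts vertices between bundles whose sizes diverge. The NO-case upper bound is immediate from the reformulation, so once the YES-case packing is in hand the approximation-preserving reduction is complete.
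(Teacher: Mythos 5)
Your reduction has a genuine gap that breaks it regardless of the YES-case packing lemma you flag as the remaining work: the valuation $v(S)=\alpha(\mathcal{G}[S])$ is \emph{not} polynomial-time computable. Computing the independence number of induced subgraphs of a cubic graph is exactly the $\mathrm{NP}$-hard problem you are reducing from, so the claim that ``the value oracle runs in polynomial time'' because $\mathcal{G}$ has a succinct adjacency list is false; a succinct representation of the graph does not give an efficient evaluator for $\alpha$. This is not a cosmetic issue. Suppose a polynomial-time $1.0042$-approximation algorithm for NSW existed and you ran it on $\mathcal{I}(\mathcal{G},\tau)$: it would output some allocation $(A_1,\ldots,A_n)$, but to decide whether you are in a YES or a NO instance you would then have to evaluate $\alpha(\mathcal{G}[A_i])$ for each bundle, which is again $\mathrm{NP}$-hard. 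The reduction is therefore circular --- the hardness you ``establish'' stems from the intractability of the valuation oracle itself, not from the structure of the NSW objective. The remark preceding Theorem~\ref{theorem:apx-hard} explicitly requires the valuations to admit a succinct representation in the operative sense that queries are efficiently answerable; your valuation does not meet that standard.

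The paper sidesteps this entirely with a different gadget. It takes the \emph{edges} of $\mathcal{G}$ as goods and uses the common valuation $f(S)=\max_{v\in V}|S\cap \delta(v)|$, which is trivially evaluated by scanning vertices and is binary $\XOS$ by $(\mathrm{P}_4)$ with $\mathcal{F}=\{\delta(v)\}_{v\in V}$. Because the graph is $3$-regular, $f$ is capped at $3$, and $f(A_i)=3$ forces $\delta(v)\subseteq A_i$ for some $v$; disjointness of bundles then forces the witnessing vertices to form an independent set, so at most $\alpha(\mathcal{G})$ agents can have value $3$. With $\tau$ agents, the YES case gives $\NSW=3$ immediately (assign $\delta(u)$ for distinct $u$ in the independent set), and the NO case gives $\NSW\le 3^{94/95}2^{1/95}$, yielding the gap $(3/2)^{1/95}>1.0042$. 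Notice how this design also avoids the second difficulty you anticipate: since the target per-agent value is the constant $3$ rather than an amount that must scale with $\tau$, there is no need for Brooks/Hajnal--Szemer\'edi balancing, blow-ups, or local rebalancing; the YES-case achievability is a one-line observation. If you want to rescue a vertex-goods reduction, you would need to replace $\alpha(\mathcal{G}[\cdot])$ with a polynomially computable binary $\XOS$ proxy, at which point you are likely to rediscover something close to the paper's edge-based construction.
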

\begin{proof}
We will show that, given a $c$-approximation algorithm for maximizing Nash social welfare, with an appropriate constant $c>1$, one can distinguish between the {\rm YES} and {\rm NO} instances of Theorem \ref{theorem:max-3-is-gap-hardness}.  

The reduction is as follows. For the given $3$-regular graph $\mathcal{G}=(V, E)$ and parameter $\tau$, we construct a fair division instance with $\tau$ agents and $m = |E|$ indivisible goods, one good corresponding to each edge of $\mathcal{G}$. For ease of presentation, the set of goods is also denoted by $E$. All agents in this fair division instance have an identical valuation $f : 2^{E} \mapsto \mathbb{R}_+$. Write $\delta(v) \coloneqq \{ e= (u, v) \in E : u \in V \}$ denote the set of edges incident on a vertex $v \in V$. With these size-$3$ subsets of edges in hand, we define, for every set $S \subseteq E$, the valuation as follows $f(S) \coloneqq \max_{v \in V} |S \cap \delta(v)|$. Valuation $f$ is indeed a binary $\XOS$ function; see property ($\mathrm{P}_4$) in Theorem \ref{theorem:xos-definitions}. 

In the given graph $\mathcal{G}$, the maximum independent set is of size at least  $\tau$ or at most $\frac{94}{95} \tau$. We will show that, analogously,   the optimal Nash social welfare (in the constructed fair division instance) will either be high or low; the multiplicative gap here will rule out a $c$-approximation algorithm.

First, we consider the case wherein the maximum independent set $\mathcal{I}$ is of size at least $\tau$. In such a setting, each agent in the fair division instance can be allocated the set of edges $\delta(u)$ for a vertex $u \in \mathcal{I}$. Since the vertices in $\mathcal{I}$ do not share an edge and $|\mathcal{I}|$ is at least the number of agents $\tau$, every agent gets a (disjoint subset) bundle of size three; recall that the graph is $3$-regular. Therefore, the optimal Nash social welfare in this case is $3$.\footnote{Note that, by definition, the value of $f$ is upper bounded by $3$, i.e., the Nash social welfare cannot be greater than $3$.} 

In the second case the maximum independent set $\mathcal{I}$ is of size is at most $\frac{94}{95} \tau$. Consequently, under any allocation of edges (goods), at most $\frac{94}{95} \tau$ agents receive a bundle of value $3$; this follows from the observation that, given any allocation in which $t \in \mathbb{Z}_+$ agents have a bundle of value $3$, we can construct an independent set of size $t$. Next, note that all the agents who do not receive a bundle of value $3$, can have value at most $2$ for their bundle; there are at least $\tau - \frac{94 \tau}{95} = \frac{\tau}{95}$ such agents. Therefore, in this case, the optimal Nash social welfare is upper bounded by 
\begin{align*}
\left(3^{ \frac{94 \tau}{95} } \ 2^{ \frac{\tau}{95} } \right)^{\frac{1}{\tau}} = 3^{\frac{94}{95}} \ 2^{\frac{1}{95}}.
\end{align*}

Hence, the two cases can be distinguished from each other via a $c$-approximation algorithm for Nash social welfare, with $c \leq \frac{3}{3^{\frac{94}{95}} \ 2^{\frac{1}{95}} } = \left( \frac{3}{2} \right)^{\frac{1}{95}}$. Since $\left( \frac{3}{2} \right)^{\frac{1}{95}} > 1.0042$, we obtain the stated result that is $\mathrm{NP}$-hard to approximate the maximum Nash social welfare within a factor of $1.0042$ under binary $\XOS$ valuations.
\end{proof}

\section{Lower Bound for Binary Subadditive Valuations}
\label{section:query-complexity-lb}
In this section we prove that, under binary subadditive valuations, an exponential number of value queries are required to obtain a sub-linear approximation for the Nash social welfare. 

\begin{restatable}{theorem}{queryComplexityBound}
\label{theorem:query-complexity}
For fair division instances $\langle [m], [n], \{ f_i \}_{i=1}^n \rangle$ with binary subadditive valuations and a fixed constant $\varepsilon \in (0,1]$, exponentially many value queries are necessarily required for finding an allocation with Nash social welfare at least $\frac{1}{n^{1-\varepsilon}}$ times the optimal. 
\end{restatable}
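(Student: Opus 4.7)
The plan is to follow the indistinguishability framework of Dobzinski--Nisan--Schapira \cite{DobzinskiNS10}, as suggested by the authors, and re-instantiate it with valuations that additionally have binary marginals. I would construct two distributions $\mathcal{D}_{\text{hi}}$ and $\mathcal{D}_{\text{lo}}$ over fair-division instances with binary subadditive valuations so that: (a) under $\mathcal{D}_{\text{hi}}$ the optimal Nash social welfare is $\Omega(k)$, where $k \coloneqq n^{1-\varepsilon}$; (b) under $\mathcal{D}_{\text{lo}}$ the optimal Nash social welfare is $O(1)$ with high probability; and (c) no deterministic algorithm making $2^{n^{\Omega(1)}}$ value queries produces statistically distinguishable transcripts on the two distributions. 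Yao's minimax principle then converts (c) into the claimed randomized query-complexity lower bound: any algorithm that achieves a $\frac{1}{n^{1-\varepsilon}}$-approximation on $\mathcal{D}_{\text{hi}}$ would, by (a) and (b), have to output allocations of very different Nash welfare in the two cases, and hence would have to distinguish between them.

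Both distributions are parameterized by a tuple $(T_1, \ldots, T_n)$ of hidden ``treasure'' subsets of $[m]$, each of size $k$, for some $m = \mathrm{poly}(n)$. Under $\mathcal{D}_{\text{hi}}$ the $T_i$'s are conditioned to be pairwise disjoint, so that the allocation $A_i = T_i$ is feasible; under $\mathcal{D}_{\text{lo}}$ the $T_i$'s are independent uniformly random $k$-subsets, chosen in a regime where the expected pairwise overlap $k^2/m$ is large enough to force global incompatibility. The valuation $v_i$ is a binary subadditive function that satisfies (i) $v_i(T_i) = k$, so that the disjoint-packing allocation under $\mathcal{D}_{\text{hi}}$ witnesses $\NSW = k$; (ii) for any query $S$ that does not coincide with the planted structure, $v_i(S)$ has, up to negligible total variation, the same distribution under $\mathcal{D}_{\text{hi}}$ and $\mathcal{D}_{\text{lo}}$; and (iii) $v_i$ is provably not binary $\XOS$, which is necessary to evade the constant-factor algorithm of Theorem~\ref{theorem:approximation-ratio}. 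Under $\mathcal{D}_{\text{lo}}$, the heavy overlap among the $T_i$'s combined with a union bound over the exponentially many allocations of $[m]$ caps the optimal $\NSW$ at $O(1)$ with high probability.

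The indistinguishability step is then largely standard: for any fixed query $(i, S)$ of sub-exponential size, the event that $v_i(S)$ differs between the two distributions requires $S$ to touch the planted $T_i$ in a highly structured way, which happens with probability exponentially small in $k$. A union bound over the algorithm's entire transcript (of length $2^{n^{o(1)}}$) keeps the statistical distance $o(1)$, after which Yao's principle delivers the randomized lower bound. The extension to $p$-mean welfare for $p \leq 1$, mentioned by the authors, follows by repeating the same argument and noting that the $\NSW$ gap it produces is in fact a $p$-mean welfare gap for every such $p$.

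The principal obstacle is the valuation design. The binary-marginals constraint rules out the \cite{DobzinskiNS10} trick of placing a large ``bonus'' on a single planted set, since the marginal onto the last element of such a set would exceed $1$. The planted value must therefore be distributed across $k$ separate unit-marginal increments, while still respecting subadditivity and---crucially---not collapsing into a binary $\XOS$ function, which would be defeated by Theorem~\ref{theorem:approximation-ratio}. Breaking the $v(S) = \max_{F \in \mathcal{F}} |S \cap F|$ characterization of Theorem~\ref{theorem:xos-definitions} appears to require a ``concave'' base shape such as a $\lceil \log_2 \rceil$-cap or a min-of-many-additives construction, and combining such a shape with the planted structure while preserving both $v(S \cup T) \leq v(S) + v(T)$ and the $\{0,1\}$-marginals property is the real technical crux. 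This is precisely the contribution that the authors flag as novel relative to the \cite{DobzinskiNS10, BBKS20} constructions.
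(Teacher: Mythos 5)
Your high-level plan matches the paper's: both re-instantiate the Dobzinski--Nisan--Schapira indistinguishability paradigm, both plant hidden ``treasure'' subsets $T_i$, and both explicitly recognize that the base valuation must fail to be binary $\XOS$ (otherwise Theorem~\ref{theorem:approximation-ratio} defeats the lower bound). But the paper departs from your setup in one structurally important way: the ``low'' case is not a second planted distribution at all. Every agent shares one \emph{deterministic} valuation $f$, with $m = n^2$ goods and the three-piece definition $f(S) = |S|$ for $|S| \le \breakone$, $f(S) = \breakone$ for $\breakone < |S| \le \breaktwo$, and $f(S) = \lceil \breakone |S| / \breaktwo \rceil$ otherwise, where $\breakone \approx (1+\delta)n^{4\delta}$ and $\breaktwo \approx n^{1+2\delta}$. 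The ``high'' family is $f'_i(S) = \max\{f(S), |S \cap T_i|\}$ for a random equal-size $n$-partition $(T_1,\dots,T_n)$ of $[m]$, so binary marginals and subadditivity of $f'_i$ come for free (both properties are closed under pointwise max). Because $f$ is fixed, the bound $\NSW \le O(n^{4\delta})$ under $f$ is proved by a short rebalancing argument over a single worst-case allocation, avoiding the union bound over exponentially many allocations that your $\mathcal{D}_{\text{lo}}$ (independent overlapping $T_i$'s) would demand; that union bound is delicate, since heavily-overlapping treasures can still admit substantial partial packings, and you give no mechanism for controlling them.

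The genuine gap is the thing you yourself flag as the crux: the base function $f$. Your candidate shapes --- a $\lceil \log_2 \rceil$-cap or a min-of-many-additives --- are not what the paper uses, and neither is obviously subadditive with $\{0,1\}$ marginals once the planted $\max$ is layered on top. The paper's $f$ breaks $(\mathrm{P}_2)$ of Theorem~\ref{theorem:xos-definitions} exactly at the plateau boundary: a set $S$ of size $\breaktwo + 1$ has $f(S) = \breakone + 1$, yet every subset of size $\breakone + 1$ lies in the flat region and has value only $\breakone$, so no non-wasteful witness exists and $f$ is not binary $\XOS$. Meanwhile the slow $\lceil \breakone |S| / \breaktwo \rceil$ ramp beyond $\breaktwo$ is precisely what preserves subadditivity (via $\lceil a + b \rceil \le \lceil a \rceil + \lceil b \rceil$) and unit marginals. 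The indistinguishability step then reduces to a Chernoff bound showing $\Pr[\,|S \cap T_i| > f(S)\,]$ is exponentially small for every fixed query, exactly as you anticipate. Without an explicit base shape satisfying all four constraints simultaneously, your proposal is a correct roadmap but does not yet constitute a proof.
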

Towards establishing this theorem, we define two (families of) fair division instances, each with $n$ agents, $m=n^2$ goods, and binary subadditive valuations. In the first instance, all the agents will have the same binary subadditive valuation, $f : 2^{[m]} \mapsto \mathbb{R}_+$, while in the second instance, the valuations of the agents will be non-identical, $f'_i : 2^{[m]} \mapsto \mathbb{R}_+$ for each agent $i \in [n]$. In particular, we will construct the valuations, $f$ and $\{f'_i \}_i$, such that $(i)$ distinguishing whether the agents' valuations are $\{f'_i\}_i$ or $f$ requires an exponential number of value queries (Lemma \ref{proposition:exponential-queries}) and $(ii)$ the optimal Nash social welfare of the two instances differ multiplicatively by a linear factor. Since the second property implies that one can use any sub-linear approximation of the optimal Nash social welfare to distinguish between the two instances (i.e., between the two valuation settings), these properties will establish the stated query lower bound.

To specify the valuations, fix a small constant $\delta \in \left( 0, \frac{1}{16} \right)$ and write integers $\breakone \coloneqq \left\lfloor (1 + \delta) \ n^{4 \delta} \right\rfloor$ along with $\breaktwo \coloneqq \left\lfloor n^{1+2 \delta} \right\rfloor$. We will assume, throughout, that $n$ is large enough to ensure that the integers $\breakone, \breaktwo \in \mathbb{Z}_+$ satisfy $\breakone < \breaktwo$. With these parameters in hand, define valuation $f:2^{[m]} \mapsto \mathbb{Z}_+$ as follows

\begin{align*}
f(S) \coloneqq \begin{cases} %note: cases 1 and 2 can be combined as min{|S|,a}
      |S| & \text{if  $|S| \leq \breakone$,} \\
      \breakone & \text{if $\breakone < |S| \leq \breaktwo$,} \\
      \left\lceil \frac{\breakone \ |S|}{\breaktwo} \right\rceil & \text{otherwise, if  $|S| > \breaktwo$} 
   \end{cases}
\end{align*}

For constructing valuations $\{ f'_i \}_i$, consider a random $n$-partition, $T_1, T_2, \ldots, T_n$, of the set of goods $[m]$, with $|T_i| = n$ for each $i \in [n]$. Now, for every $i \in [n]$ and subset $S \subseteq [m]$, define $f'_i(S) \coloneqq \max \{ f(S), | S \cap T_i | \}$. The following two lemmas show that the constructed valuations are binary subadditive. 

\begin{lemma}
\label{proposition:binary-subadditivity-of-v}
The valuation $f$ (as defined above) is subadditive and has binary marginals. 
\end{lemma}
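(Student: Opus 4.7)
The plan is to exploit the fact that $f(S)$ depends only on $|S|$, so define $g: \{0, 1, \ldots, m\} \to \mathbb{Z}_{\ge 0}$ with $f(S) = g(|S|)$. Both properties then reduce to properties of $g$: binary marginals of $f$ correspond to $g(k+1) - g(k) \in \{0,1\}$ for all $k$, and once binary marginals are in hand $g$ is monotone nondecreasing, so $|A \cup B| \le |A| + |B|$ gives $f(A \cup B) = g(|A \cup B|) \le g(|A| + |B|)$. Hence subadditivity of $f$ will follow from the one-dimensional inequality $g(a+b) \le g(a) + g(b)$ for all $a, b \in \mathbb{Z}_{\ge 0}$.

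For binary marginals I would walk through the boundaries between the three regimes. The increment equals $1$ throughout the first regime; at $k = \breakone$ the function plateaus since $g(\breakone+1) = \breakone = g(\breakone)$, and it remains constant through the middle regime; at $k = \breaktwo$ one computes $g(\breaktwo + 1) = \lceil \breakone(\breaktwo+1)/\breaktwo \rceil = \breakone + \lceil \breakone/\breaktwo \rceil = \breakone + 1$ using $0 < \breakone/\breaktwo < 1$, so the jump is exactly $1$. For $k > \breaktwo$, the step $\lceil \breakone(k+1)/\breaktwo \rceil - \lceil \breakone k/\breaktwo \rceil$ lies in $\{0,1\}$ because the argument inside the ceiling advances by $\breakone/\breaktwo < 1$ per unit, so the ceiling can increase by at most one.

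For subadditivity of $g$, the central observation I would isolate first is the lower bound $g(k) \ge \breakone k/\breaktwo$ for every $k$; this reduces to $k \ge \breakone k/\breaktwo$ in the first regime (since $\breakone \le \breaktwo$), to $\breakone \ge \breakone k/\breaktwo$ in the middle regime (since $k \le \breaktwo$), and is immediate from the ceiling in the third regime. Given this, whenever $a + b > \breaktwo$, the sum $g(a) + g(b)$ is an integer bounded below by $\breakone(a+b)/\breaktwo$, hence by $\lceil \breakone(a+b)/\breaktwo \rceil = g(a+b)$. When $a + b \le \breakone$ the inequality is trivial, and when $\breakone < a+b \le \breaktwo$ a short case split on whether $a$ and $b$ each lie below or above $\breakone$ shows $g(a) + g(b) \ge \breakone = g(a+b)$ (if both $\le \breakone$ use $a+b > \breakone$; otherwise one of $g(a), g(b)$ is already $\breakone$ since that argument lies in $(\breakone, \breaktwo]$). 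The only real obstacle is keeping ceilings and strict inequalities straight in the third regime, and the lower bound $g(k) \ge \breakone k/\breaktwo$ is designed precisely to sidestep ceiling arithmetic by a one-line integrality argument.
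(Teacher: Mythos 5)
Your proof is correct and follows essentially the same route as the paper: the same boundary-by-boundary check for binary marginals, and the same key lower bound (your $g(k) \ge \breakone k / \breaktwo$ is the paper's $\lceil \breakone |X| / \breaktwo \rceil \le f(X)$ since both sides are integers) driving the subadditivity argument when $|S\cup T| > \breaktwo$. Phrasing everything through the cardinality function $g$ and using integrality in place of ceiling superadditivity is a clean presentational variant, not a different argument.
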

\begin{proof}
We will first prove that the function $f$ has binary marginals and then establish that it is subadditive. In particular, the following case analysis shows that for each subset $S \subseteq [m]$ and good $g \in [m] \setminus S$ we have $f(S \cup \{ g \}) - f(S) \in \{0,1\}$ \\
\noindent
{\it Case {\rm I}:} $|S| < \breakone$. In this case, the marginal increase is equal to one, $f(S \cup \{ g\}) - f(S) = |S+g| - |S| = 1$.

\noindent
{\it Case {\rm II}:} $|S| = \breakone$. Here, $\breakone +1 =   |S \cup \{ g \}| \leq \breaktwo$; recall that $\breakone + 1 \leq \breaktwo$. Therefore, the marginal increase is zero, $f(S \cup \{ g\}) - f(S) = \breakone - \breakone = 0$.

\noindent
{\it Case {\rm III}:} $\breakone < |S| < \breaktwo$. In this case, we have $\breakone < |S|+1 \leq \breaktwo$. Hence,  the marginal values are again zero, $f(S \cup \{g\}) - f(S) = \breakone - \breakone = 0$.

\noindent
{\it Case {\rm IV}:} $|S| = \breaktwo$. Since $|S|+1 > \breaktwo$, here we have 
\begin{align*}
f(S \cup \{ g \}) - f(S) & = \left\lceil \frac{\breakone \ (\breaktwo + 1)}{\breaktwo} \right\rceil - \breakone = \left\lceil \breakone + \frac{\breakone}{\breaktwo} \right\rceil -  \breakone \\
& = \left( \breakone + 1 \right) - \breakone \tag{since $\breakone$ is an integer and $\breakone/\breaktwo <1$} \\ 
& = 1.
\end{align*}

\noindent
{\it Case {\rm V}:} $|S| > \breaktwo$. In this case, the marginal increase is
\begin{align*}
f(S \cup \{ g\}) - f(S) & = \left\lceil \frac{\breakone \ (|S| + 1)}{\breaktwo} \right\rceil - \left\lceil \frac{\breakone \ |S|}{\breaktwo} \right\rceil = \left\lceil \frac{\breakone \ |S|}{\breaktwo} + \frac{\breakone}{\breaktwo} \right\rceil - \left\lceil \frac{\breakone \ |S|}{\breaktwo} \right\rceil \\
& \leq \left\lceil \frac{\breakone \ |S|}{\breaktwo} \right\rceil + \left\lceil \frac{\breakone}{\breaktwo} \right\rceil - \left\lceil \frac{\breakone \ |S|}{\breaktwo} \right\rceil \tag{since $\lceil a + b \rceil \leq \lceil a \rceil + \lceil b \rceil$ for any $a,b \in \mathbb{R}_+$} \\
& = \left\lceil \frac{\breakone}{\breaktwo} \right\rceil = 1 \tag{since $\breakone/\breaktwo <1$}
\end{align*}
By definition, $f$ is an integer-valued function and is monotonic. Therefore, the previous inequality ($f(S \cup \{ g \}) - f(S) \leq 1$) implies that the marginals are binary in this case as well.  %, i.e., $f(S \cup \{g \}) - f(S) \in \{0,1\}$.  
 
The five cases above establish that throughout the valuation $f$ has binary marginals. \\

We now establish the subadditivity of $f$, i.e., show that $f(S \cup T) \leq f(S) + f(T)$, for all subsets $S,T \subseteq [m]$. Towards this, we consider the following cases based on the cardinality of the set $S \cup T$ \\

\noindent
{\it Case {\rm I}:} $|S \cup T| \leq \breaktwo$. Note that $f(X) = \min \{ |X| , \breakone \}$, for any subset that satisfies $|X| \leq \breaktwo$. Hence, here we have 
\begin{align*}
f(S \cup T) & = \min \{ |S \cup T|, \breakone \} \\ 
& \leq \min \{ |S| + |T|,  \breakone \} \tag{since $|S \cup T| \leq |S| + |T|$} \\
& \leq \min\{|S|, \breakone\}  + \min\{|T|, \breakone\} \\
& = f(S) + f(T) \tag{since $|S| \leq \breaktwo$ and $|T| \leq \breaktwo$}
\end{align*}
Therefore, in this case $f$ is subadditive. \\

\noindent
{\it Case {\rm II}:} $|S \cup T| > \breaktwo$. Note that, for every $X \subseteq [m]$, the function $f$ (by definition) satisfies 
\begin{align}
\left\lceil |X| \ \frac{\breakone}{\breaktwo} \right\rceil \leq f(X)  \label{ineq:f-case} 
\end{align}
Furthermore, given that $|S \cup T| > \breaktwo$, the following equality holds  
\begin{align*}
f(S \cup T) & = \left\lceil |S \cup T| \ \frac{\breakone}{\breaktwo} \right\rceil \\ 
& \leq \left\lceil (|S| + |T|) \ \frac{\breakone}{\breaktwo} \right\rceil \tag{via $|S \cup T| \leq |S| + |T|$} \\ 
& \leq \left\lceil |S| \ \frac{\breakone}{\breaktwo} \right\rceil + \left\lceil |T| \ \frac{\breakone}{\breaktwo} \right\rceil \tag{since $\lceil a+b \rceil \leq \lceil a \rceil + \lceil b \rceil$ for all $a,b \in \mathbb{R}_+$} \\
& \leq f(S) + f(T) \tag{via (\ref{ineq:f-case})}
\end{align*}

Hence, this case analysis shows that $f$ is subadditive. Therefore, the lemma stands proved. 
\end{proof}

\begin{lemma}
\label{proposition:binary-subadditivity-of-f-prime}
The valuations $\{ f'_i \}_{i \in [n]}$ (as defined above) are subadditive and have binary marginals. 
\end{lemma}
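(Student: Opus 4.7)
The plan is to write $f'_i$ as the pointwise maximum of two functions already known to behave well, and then to observe that both subadditivity and the binary-marginals property are preserved under pointwise maxima of (monotone, nonnegative, integer-valued) functions.

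First, I would introduce the auxiliary function $g_i : 2^{[m]} \mapsto \mathbb{Z}_+$ defined by $g_i(S) \coloneqq |S \cap T_i|$, so that by construction $f'_i(S) = \max\{f(S),\, g_i(S)\}$. The function $g_i$ is additive (hence subadditive) and plainly has binary marginals: adding a good $g$ to $S$ increases $g_i(S)$ by $1$ if $g \in T_i$ and by $0$ otherwise. By Lemma \ref{proposition:binary-subadditivity-of-v}, the function $f$ is subadditive with binary marginals.

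Second, I would verify subadditivity of $f'_i$ via the elementary inequality $\max\{a+b,\, c+d\} \leq \max\{a,c\} + \max\{b,d\}$, which holds for all nonnegative reals. Applying this, for any $S, T \subseteq [m]$,
\begin{align*}
f'_i(S \cup T)
&= \max\{f(S\cup T),\, g_i(S \cup T)\} \\
&\leq \max\{f(S) + f(T),\, g_i(S) + g_i(T)\} \\
&\leq \max\{f(S), g_i(S)\} + \max\{f(T), g_i(T)\} \\
&= f'_i(S) + f'_i(T),
\end{align*}
where the first inequality uses subadditivity of $f$ and $g_i$ separately.

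Third, I would check the binary-marginals property. Fix $S \subseteq [m]$ and a good $g \in [m] \setminus S$, and set $a \coloneqq f(S)$, $a' \coloneqq f(S \cup \{g\})$, $b \coloneqq g_i(S)$, $b' \coloneqq g_i(S \cup \{g\})$. Since both $f$ and $g_i$ are monotone with binary marginals, $a \leq a' \leq a+1$ and $b \leq b' \leq b+1$. Hence $f'_i(S \cup \{g\}) = \max\{a',b'\} \leq \max\{a+1, b+1\} = \max\{a,b\}+1 = f'_i(S)+1$, while monotonicity of $\max$ gives $f'_i(S \cup \{g\}) \geq f'_i(S)$. Because $f'_i$ is integer-valued, the marginal lies in $\{0,1\}$. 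I expect no real obstacle here; the argument is the standard observation that the pointwise maximum of monotone integer-valued functions with binary marginals inherits the binary-marginals property, and the key inequality $\max\{a+b,c+d\}\leq \max\{a,c\}+\max\{b,d\}$ is the one nontrivial ingredient for subadditivity.
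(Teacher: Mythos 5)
Your proof is correct and follows essentially the same route as the paper: decompose $f'_i$ as the pointwise maximum of $f$ and the binary additive function $S \mapsto |S \cap T_i|$, then show that both subadditivity and the binary-marginals property are preserved under pointwise maxima. The only difference is that you inline the two closure arguments, whereas the paper delegates them to auxiliary Lemmas \ref{proposition:subadditivity-max} and \ref{proposition:bin-marginals-max} in the appendix.
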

\begin{proof}
For any subset $T \subseteq [m]$, consider the function $u_T(S) \coloneqq |S \cap T|$, for all subsets $S \subseteq [m]$. Since $u_T(\cdot)$ is an additive function with binary marginals, $u_T(\cdot)$ is binary subadditive. 

Recall that  $f'_i(S) = \max\{ f(S), |S \cap T_i| \} = \max\{ f(S), u_{T_i}(S) \}$, for every subset $S \subseteq [m]$. That is, $f'_i$ is defined to be the pointwise maximum of binary subadditive functions $f(\cdot)$ and $u_{T_i}(\cdot)$. 

Lemma \ref{proposition:subadditivity-max} (see Appendix \ref{appendix:query-complexity-lb}) shows that subadditivity is closed under the pointwise maximum operation. Lemma \ref{proposition:bin-marginals-max} (Appendix \ref{appendix:query-complexity-lb}) establishes an analogous result for the binary-marginals property. Hence, applying these two results, we obtain that $f'_i$ is a binary subadditive function, for each $i \in [n]$.  
\end{proof}

\begin{lemma}
\label{proposition:exponential-queries}
An exponential number of value queries are required to distinguish between the functions $f$ and $f'_i$, for any $i \in [n]$.
\end{lemma}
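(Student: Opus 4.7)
The plan is to invoke Yao's minimax principle and reduce to analyzing an arbitrary deterministic algorithm against the distribution induced by a uniformly random $n$-subset $T_i$ of $[m]$. Fix $i$ and consider the execution of the algorithm when it is fed the oracle for $f$; this yields some (possibly adaptive) sequence of queries $S_1, \ldots, S_Q$. Because $f'_i(S) = \max\{f(S), |S \cap T_i|\}$, the two oracles return identical answers on any set $S$ with $|S \cap T_i| \leq f(S)$, so if this inequality holds on every $S_j$ then the algorithm's execution on $f'_i$ is word-for-word identical to its execution on $f$ and it cannot distinguish the two worlds. The task therefore reduces to showing that $\Pr_{T_i}[|S \cap T_i| > f(S)]$ is exponentially small for every single set $S \subseteq [m]$; a union bound over the $Q$ queries then finishes the argument.

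For a fixed $S$, since $T_i$ is a uniformly random $n$-subset of $[m] = [n^2]$, the intersection $|S \cap T_i|$ is hypergeometric with mean $|S|/n$, and I would split into three regimes defined by the pieces of $f$. When $|S| \leq \breakone$, we have $f(S) = |S| \geq |S \cap T_i|$ deterministically, so the disagreement probability is zero. When $\breakone < |S| \leq \breaktwo$, the threshold is $f(S) = \breakone \sim n^{4\delta}$ while $\mathbb{E}[|S \cap T_i|] \leq \breaktwo/n = n^{2\delta}$, so the hypergeometric upper-tail inequality $\Pr[X \geq t] \leq (e\mu/t)^t$ gives a bound of the form $(2e/n^{2\delta})^{\breakone} = \exp(-\Omega(n^{4\delta} \log n))$. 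Finally, when $|S| > \breaktwo$, the threshold $f(S) \geq \breakone|S|/\breaktwo$ is an $\breakone n/\breaktwo = n^{2\delta}$-factor above the mean $|S|/n$, so a multiplicative Chernoff bound for hypergeometrics yields $\exp(-\Omega(n^{2\delta} \cdot |S|/n)) \leq \exp(-\Omega(n^{4\delta}))$, using $|S| > \breaktwo$.

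Combining the three regimes gives a uniform single-query error probability of at most $\exp(-\Omega(n^{4\delta}))$. Consequently, any (possibly randomized) algorithm that issues at most $Q \leq 2^{n^c}$ queries, for any constant $c < 4\delta$, distinguishes $f$ from $f'_i$ with probability at most $Q \cdot \exp(-\Omega(n^{4\delta})) = o(1)$, which proves the claimed exponential lower bound on query complexity.

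I expect the main obstacle to lie in the bookkeeping around the floors in $\breakone = \lfloor(1+\delta)n^{4\delta}\rfloor$ and $\breaktwo = \lfloor n^{1+2\delta} \rfloor$, particularly at the boundary regimes $|S| \in \{\breakone+1, \ldots, \breaktwo\}$ and $|S|$ only marginally above $\breaktwo$, where one must verify that the multiplicative gap between threshold and mean is still at least $n^{2\delta}$ so that the chosen tail inequality bites and yields an $\exp(-\Omega(n^{4\delta}))$ bound uniformly. Beyond that, the argument is a fairly standard Yao-plus-concentration query-complexity reduction in the spirit of Dobzinski--Nisan--Schapira, the point being that the specific choice of $f$ manufactures the distinguishing gap while preserving binary marginals (already ensured by Lemma~\ref{proposition:binary-subadditivity-of-v} and Lemma~\ref{proposition:binary-subadditivity-of-f-prime}).
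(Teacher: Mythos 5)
Your proposal is correct and follows the paper's argument closely: the same three-way case analysis on $|S|$, the same observation that sets with $|S| \leq \breakone$ are deterministically indistinguishable, and a Chernoff-type concentration bound on the hypergeometric (equivalently, negatively associated) variable $|S \cap T_i|$ in the remaining two regimes to show the per-set distinguishing probability is exponentially small. The differences — making the Yao reduction and adaptivity argument explicit, and in the $|S| > \breaktwo$ regime using the full threshold $\breakone|S|/\breaktwo$ rather than the paper's weaker lower bound $(1+\delta)|S|/n$ — are presentational refinements of the same route, not a genuinely different proof.
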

\begin{proof}
We will show that, for any subset $S \subseteq [m]$, the inequality $f'_i(S) \neq f(S)$ holds with exponentially small probability, over random subset $T_i$ (that define $f_i$).  Therefore, exponentially many value queries are required to distinguish between the two functions.  %Explain Yao's principle here. 

Towards bounding the probability, we conduct the following case analysis based on the size of the set $S$:  \\
\noindent
{\it Case {\rm I}:} $|S| \leq \breakone$. In this case, we have $f(S) = |S| = f'_i(S)$. Hence, if the set $S$ is of size at most $\breakone$, then the functions $f$ and $f'_i$ are indistinguishable.\\

\noindent
{\it Case {\rm II}:} $\breakone < |S| \leq \breaktwo$. For any such subset $S \subseteq [m]$, that values $f'_i(S)$ and $f(S)$ differ iff $f'_i(S) = |S \cap T_i| > f(S) = \breakone$. That is, the queried value of set $S$ can be used to identity the underlying valuation iff $|S \cap T_i| \geq \breakone + 1 > (1+\delta)n^{4\delta}$.
Recall that $T_i \subset [m]$ is a random size-$n$ subset and $m = n^2$. Now, for each good $g \in S$, write $\chi_g$ to denote the indicator random variable that is equal to one iff $g \in T_i$, and note that 
\begin{align*}
\mathbb{E} \left[ |S \cap T_i | \right] = \sum_{g \in S} \mathbb{E} \left[ \chi_g \right] = |S| \ \frac{|T_i|}{m} = \frac{|S|}{n} \leq \frac{\breaktwo}{n} \leq n^{2 \delta} 
\end{align*}
Since the random variables $\chi_g$s are negatively associated \cite{BallsBinsDubhashiRanjan} and $(1+\delta)n^{4\delta} \geq n^{2 \delta} \geq \mathbb{E} \left[ |S \cap T_i | \right] $, the Chernoff bound gives us $\mathrm{Pr} \left\{|S\cap T_i| > (1+\delta)n^{4\delta} \right\} \leq 2^{-(1+\delta)n^{4\delta}}$; see, e.g., \cite[Theorem 4.4 (3)]{MU-Book}. Therefore, in this case one can distinguish between the valuations with an exponentially small probability only.  \\ 

\noindent
{\it Case {\rm III}:} $|S| > \breaktwo$. Analogous to the previous case, here the values $f(S)$ and $f'_i(S)$ differ iff $|S \cap T_i| > f(S) = \left\lceil \frac{\breakone \ |S|}{\breaktwo} \right\rceil$. That is, to differentiate we require 
\begin{align*}
|S \cap T_i| & \geq \frac{ \breakone \ |S|}{\breaktwo} \\
& \geq \frac{\left( (1+\delta) \ n^{4\delta} - (1+\delta) \right) \ |S|}{\breaktwo} \tag{since $ \breakone  \geq  (1+\delta) \ n^{4\delta} - 1 > (1+\delta) \ n^{4\delta} - (1+\delta)$} \\
& \geq \frac{\left( n^{4\delta} - 1 \right) (1+\delta) \ |S|}{n^{1+2\delta}} \tag{since $\breaktwo \leq n^{1+2\delta}$} \\
& = \left( n^{2\delta} - \frac{1}{ n^{2\delta}} \right) (1+\delta) \ \frac{|S|}{n} \\
& \geq (1+\delta) \ \frac{|S|}{n} \tag{for large $n$}\\
\end{align*} 
Furthermore, $\mathbb{E} \left[ |S \cap T_i | \right] = \frac{|S|}{n} \geq n^{2 \delta} $; here, the last inequality holds, since in the current case $|S| > \breaktwo =\left\lfloor n^{1+2 \delta} \right\rfloor$. Again, applying the Chernoff bound we get that the differentiating event occurs with exponentially small probability, $\mathrm{Pr} \left\{ |S\cap T_i| \geq (1+\delta)\frac{|S|}{n} \right\} \leq \mathrm{exp} \left( -\frac{n^{4 \delta} \ \delta^2}{3} \right)$. \\

Overall, the analysis shows that, to distinguish $f$ and $f'_i$, one necessarily needs to query the values of exponentially many subsets $S$. This completes the proof.  
\end{proof}

\subsection{Proof of Theorem \ref{theorem:query-complexity}}

Here, we establish Theorem \ref{theorem:query-complexity}, our main negative result for binary subadditive valuations. 

With $n$ agents and $m=n^2$ goods, we consider two families of instances with binary subadditive valuations (see Lemmas \ref{proposition:binary-subadditivity-of-v} and \ref{proposition:binary-subadditivity-of-f-prime}): the first one in which all the agents have the same valuation $f$, and the other wherein the agents' valuations are $\{f'_i \}_{i=1}^n$.   Lemma \ref{proposition:exponential-queries} shows that exponentially many value value queries are required to distinguish between these two cases, i.e., to determine whether the agents' valuations are $f$ or $\{f'_i\}_i$.  

We will next establish that such a distinction can be made via an $n^{1-\varepsilon}$ approximation to the optimal Nash social welfare and, hence, obtain the stated query complexity of approximating the Nash social welfare. Note that,  under valuations $\{f'_i\}_i$, the optimal Nash welfare is equal to $n$. In particular, allocating bundle $T_i$ to agent $i$ leads to $f'_i(T_i) = n$, for each $i \in [n]$, i.e., here the Nash social welfare of the allocation $(T_1, \ldots, T_n)$ is $n$.  

By contrast, under valuation $f$, the optimal Nash social welfare is at most $ \left(2 n^{4\delta} + 1\right)$. In fact, the following argument shows that, for any allocation $(A_1, \ldots, A_n)$, the average social welfare $\frac{1}{n} \sum_{i=1}^n f(A_i) \leq 2 n^{4\delta} + 1$. Hence, via the AM-GM inequality, this upper bound holds for the optimal Nash social welfare as well.  

Let $\alloc = (A_1, A_2, \ldots, A_n)$ be the allocation that maximizes the average social welfare under $f$. We can assume, without loss of generality, that for each agent $i \in [n]$, either $|A_i| > \breaktwo$ or $|A_i| \leq \breakone$. Otherwise, if for some agent $j \in [n]$, we have $\breakone < |A_j| \leq \breaktwo$, then we can iteratively remove goods from $A_j$ until $|A_j| = \breakone$. This update will not decrease $f(A_j)$ (this value will continue to be $\breakone$) and, hence, the social welfare remains unchanged as well. For ease of analysis, we further modify allocation $\alloc$: while there are two agents $j, k \in [n]$ with $|A_k| \geq |A_j| > \breaktwo$, we iteratively transfer goods from $A_j$ to $A_k$ until $|A_j| = \breakone$. Note that after each transfer the social welfare decreases by at most one (i.e., the drop in average social welfare is at most ${1}/{n}$):\footnote{Recall that for any $a, b \in \mathbb{R}_+$, the following inequalities hold: $\lceil a \rceil + \lceil b \rceil - 1 \leq \lceil a + b \rceil \leq \lceil a \rceil + \lceil b \rceil$.} for any $s \leq |A_j|$, we have $\left\lceil \frac{\breakone}{\breaktwo} \left(  |A_k| + s \right) \right\rceil + \left\lceil \frac{\breakone}{\breaktwo} \left(  |A_j|  - s \right) \right\rceil \geq \left\lceil \frac{\breakone}{\breaktwo} \left(  |A_k| + s \right)  + \frac{\breakone}{\breaktwo} \left(  |A_j|  - s \right) \right\rceil = \left\lceil \frac{\breakone}{\breaktwo}  |A_k|  + \frac{\breakone}{\breaktwo} |A_j| \right\rceil \geq \left\lceil \frac{\breakone}{\breaktwo}  |A_k| \right\rceil + \left\lceil \frac{\breakone}{\breaktwo} |A_j| \right\rceil - 1$.   

Since at most $n$ such transfers can occur (between pairs of agents), the \emph{average} social welfare of allocation $\alloc$ decreases by at most one after all the transfers. Now, allocation $\alloc$ has exactly one agent with bundle size greater than $\breakone$.  This observation gives us the following upper bound
\begin{align*}
\frac{1}{n} \sum_{i=1}^n f(A_i) & \leq \frac{1}{n} \left( (n-1) \breakone + \left\lceil \frac{\breakone}{\breaktwo} m \right\rceil \right) \\
& \leq \frac{1}{n} \left( n \breakone +  \frac{\breakone}{\breaktwo} m \right) \\
& \leq \breakone + \frac{n \breakone}{\breaktwo} \tag{since $m=n^2$} \\
& \leq 2 n^{4 \delta} \tag{since $\breakone =\lfloor (1 + \delta) n^{4 \delta} \rfloor$ and $\breaktwo =\lfloor n^{1+2\delta} \rfloor$}
\end{align*}
 
Therefore, under valuation $f$, the average social welfare is at most $ 2 n^{4 \delta} + 1$. As mentioned previously, this implies that the ratio of the optimal Nash welfares under $f'_i$s and $f$ is $\Omega(n^{1- 4 \delta})$. This, overall, establishes that any sub-linear approximation would differentiate between the valuations and, hence, require exponentially many value queries. The theorem stands proved. 

\section{Conclusion and Future Work}
We develop algorithmic and hardness result for Nash social welfare maximization under binary $\XOS$ and binary subadditive valuations. Our algorithm provides (under binary $\XOS$ valuations) constant-factor approximations simultaneously for Nash social welfare, social welfare, and $\GMMS$. It would be interesting to extend the positive result for Nash social welfare to the asymmetric version, wherein each agent has an associated weight (entitlement) $e_i \in \mathbb{R}_+$, and the objective is to find an allocation $(X_1, \ldots, X_n)$ that maximizes $\left( \prod_i \left( v_i(X_i) \right)^{e_i} \right)^{\frac{1}{\sum_i e_i}}$. Another interesting direction for future work is to develop, under binary $\XOS$ valuations, constant-factor approximation algorithms for $p$-mean welfare maximization, with $p \leq 1$. 

\section*{Acknowledgements}
Siddharth Barman gratefully acknowledges the support of a Ramanujan Fellowship (SERB - {SB/S2/RJN-128/2015}).

\bibliographystyle{alpha} 
\bibliography{bin-marginals-refs}

\appendix
\section{Missing Proofs from Section \ref{section:prelims}}
\label{appendix:prelims}

\PropNonWasteful*
\begin{proof} 
Consider any subset $X \subset S$ of the non-wasteful set $S$ and, towards a contradiction, assume that $v(X) < |X|$. Note that $v(X) \leq |X|$, 
since function $v$ has binary marginals. Now, consider the process of adding goods $\overline{g} \in S \setminus X$ into the set $X$ one by one. Again, the fact that $v$ has binary marginals ensures that in this process the increase in value after the addition of each good $\overline{g}$ is at most one. Therefore, at the end of the process $v(X \cup (S \setminus X)) \leq v(X) + |S \setminus X| < |X| + |S \setminus X| = |S|$. This inequality, $v(S) < |S|$, contradicts the fact that $S$ is non-wasteful and completes the proof. 
\end{proof}

We next restate and prove Theorem \ref{theorem:xos-definitions}. 
\ThmBinaryXOSChar*
\begin{proof}
We will establish the equivalence between these four properties of binary $\XOS$ functions by proving the following cyclic implications $(\mathrm{P}_1) \implies (\mathrm{P}_2) \implies (\mathrm{P}_3) \implies (\mathrm{P}_4) \implies (\mathrm{P}_1)$. \\

\noindent
{\it Implication {\rm I}:} ($\mathrm{P}_1$)$\implies$($\mathrm{P}_2$). Let $v$ be a function that is $\XOS$ and has binary marginals. The binary-marginals property ensures that $v$ is monotonic as well, i.e., $v(X) \leq v(Y)$ for any subsets $X \subseteq Y \subseteq [m]$. To establish property ($\mathrm{P}_2$), consider the following iterative procedure for any subset $S \subseteq [m]$: initialize $S' = S$ and while there exists a good $g' \in S'$ such that $v(S' \setminus \{ g' \}) = v(S)$, we update $S' \leftarrow S' \setminus \{ g' \}$. Let $X \subseteq S$ be the subset that remains at this end of this procedure, i.e., $v(S) = v(X)$ and, for every good $g \in X$, we have $v\left( X \setminus \{g\} \right) = v(X) - 1$; note that  $v(X) - v\left( X \setminus \{g\} \right) \in \{0,1\}$, since $v$ has binary marginals. Since the function $v$ is $\XOS$, we have that $v(X) = \ell(X)$ for some additive function $\ell: 2^{[m]} \mapsto \mathbb{R}_+$, i.e., $v(X) = \ell(X) = \sum_{g \in X} \ell(g)$. 

We will now show that $\ell(g) = 1$ for every $g \in X$. This will imply that property ($\mathrm{P}_2$) holds, since we will then have $v(S) = v(X) = \sum_{g \in X} \ell(g) = |X|$.  
Towards a contradiction, assume that $\ell(\widehat{g}) \neq 1$ for some good $\widehat{g} \in X$. In such a case we have $\ell(\widehat{g}) < 1$.\footnote{Note that it cannot be the case that $\ell(\widehat{g}) > 1$, since the binary-marginals property ensures $\ell( \widehat{g}) \leq v(\widehat{g}) \leq 1$.} Therefore, 

\begin{align*}
v \left( X \setminus \left\{ \widehat{g} \right\} \right) & \geq \ell \left( X \setminus \left\{\widehat{g} \right\} \right) \tag{since $v$ is $\XOS$}\\
& = \ell(X) - \ell(\widehat{g}) \tag{function $\ell$ is additive}\\
& = v(X) - \ell(\widehat{g}) \\ 
& > v(X) - 1. \tag{since $\ell(\widehat{g}) < 1$}
\end{align*}
This (strict) inequality, $v \left( X \setminus \left\{ \widehat{g} \right\} \right) > v(X) - 1$, contradicts the fact that, for every good $g \in X$, we have $v(X \setminus\{g\}) = v(X) - 1$. Therefore, $\ell(g) = 1$ for every good $g \in X$, and property ($\mathrm{P}_2$) follows. \\

\noindent
{\it Implication {\rm II}:} ($\mathrm{P}_2$) $\implies$ ($\mathrm{P}_3$). To show that a function $v$ that satisfies property ($\mathrm{P}_2$) also satisfies ($\mathrm{P}_3$), we will construct a family of additive functions $\{\ell_t\}_t$ such that (i) for every set $S \subseteq [m]$, we have $v(S) = \max_{t} \ell_t(S)$, and (ii) the additive functions $\{\ell_t\}_t$ have binary marginals. 

To construct such a collection of additive functions, consider the collection of all non-wasteful subsets (with respect to $v$). In particular, write $\mathcal{T} \coloneqq \{ T \subseteq [m] : v(T) = |T | \}$, and, for each subset $T \in \mathcal{T}$, define additive function $\ell_T(X) \coloneqq |X \cap T|$ (for every subset $X \subseteq [m]$). Indeed, the additive functions $\left\{ \ell_T \right\}_{T \in \mathcal{T}}$ have binary marginals. It remains to show that these functions also satisfy requirement (i). Towards this, for any set $S \subseteq [m]$, let $X \subseteq S$ be the subset that satisfies ($\mathrm{P}_2$), i.e., $v(S) = v(X) = |X|$. Since $X$ is non-wasteful, we have $X \in \mathcal{T}$ and $v(S) = |X| = |S \cap X| = \ell_X(S) \leq \max_{T \in \mathcal{T} } \ \ell_T(X)$. 

Furthermore, the fact that function $v$ has binary marginals (see ($\mathrm{P}_2$)) implies that $v$ is monotonic. Therefore, for \emph{any} non-wasteful subset $T$, we have 
\begin{align*}
v(S) & \geq  v(S \cap T) \tag{$v$ is monotonic} \\
& = |S \cap T| \tag{applying Proposition \ref{proposition:non-wasteful-subsets} to $(S \cap T) \subseteq T \in \mathcal{T}$ } \\
& \geq \max_{T' \in \mathcal{T}} |S \cap T'|. \tag{maximizing over all $T' \in \mathcal{T}$} \\
& = \max_{T' \in \mathcal{T}} \ell_{T'} (S)
\end{align*}
Therefore, $v(S) \geq \max_{T \in \mathcal{T}} \ \ell_T(S)$ and property ($\mathrm{P}_3$) holds. \\

\noindent
{\it Implication {\rm III}:} ($\mathrm{P}_3$) $\implies$ ($\mathrm{P}_4$). Here, the support of the binary additive functions $\{ \ell_t \}_{t=1}^L$ that define $v$ in ($\mathrm{P}_3$)  give us the set family $\mathcal{F}$  for ($\mathrm{P}_4$); specifically, write $F_t \coloneqq \left\{ g \in [m] : \ell_t(g) = 1 \right\}$, for each $t \in [L]$, and $\mathcal{F} \coloneqq \left\{  F_t \right\}_{t=1}^L$. Since $v(S) = \max_t  \ell_t(S) = \max_t |F_t \cap S| = \max_{ F \in \mathcal{F} } |F \cap S|$, for any set $S \subseteq [m]$, property ($\mathrm{P}_4$) holds. \\

\noindent
{\it Implication {\rm IV}:} ($\mathrm{P}_4$) $\implies$ ($\mathrm{P}_1$). Complementary to the previous implication, we obtain a collection of additive functions $\{ \ell_F \}_{F \in \mathcal{F}}$ from the set family $\mathcal{F}$ (that defines function $v$ in ($\mathrm{P}_4$)). In particular, for each $F \in \mathcal{F}$, define $\ell_F(X) = |F \cap X|$ (for every $X \subseteq [m]$). Note that the functions $\ell_F(\cdot)$s are binary additive and $v(S) = \max_{F \in \mathcal{F}} \ |F \cap S| = \max_{F \in \mathcal{F}} \ell_F(S)$. Given that function $v$ can be expressed a pointwise maximum of additive functions, we get that $v$ is $\XOS$. Also, $v$ has binary marginals--this follows from the observation that each $\ell_F(\cdot)$ has binary marginals and this property is preserved under pointwise maximization (Lemma \ref{proposition:bin-marginals-max}). Therefore, $v$ satisfies ($\mathrm{P}_1$).\\

Overall, we obtain the equivalence among the four properties. 
\end{proof}

Next, we restate and prove Lemma \ref{proposition:non-wasteful-allocation}.

\PropNonWastefulAlloc*
\begin{proof}
For each agent $i \in [n]$, the valuation, $v_i$, is a binary $\XOS$ function. Therefore, property $(P_2)$ in Theorem \ref{theorem:xos-definitions} ensures that for each given bundle $P_i$ there necessarily exists a non-wasteful subset $P'_i \subseteq P_i$ with the property that $|P'_i| = v_i(P'_i) = v_i(P_i)$. This establishes the existence of the desired allocation $\palloc' = (P'_1, P'_2, \ldots, P'_n)$. 
\end{proof}

\section{Missing Proofs from Section \ref{section:nsw}}
\label{appendix:nsw}

\begin{proposition}
\label{prop:constrained-packing}
Let $D_0, D_1, \ldots, D_\ell$ be a collection of pairwise disjoint subsets of goods such that $|D_k|$ is an integer multiple of $4^{k+2}$, for each $0 \leq k \leq \ell$. Also, let $\mathcal{B} = (B_1, B_2, \ldots , B_t)$ be any $t$-partition of the set $\cup_{k=0}^\ell D_k$ with the property that \\
(P): For any index $k$ and each good $g \in D_k$, if $g \in B_b$, then $|B_b| \leq 4^{k+2}$ (i.e., goods in $D_k$ must be assigned among bundles of size at most $4^{k+2}$).\\
Then, in the partition, the number of bundles $t \geq \sum_{k=0}^\ell \frac{|D_k|}{4^{k+2}}$.
\end{proposition}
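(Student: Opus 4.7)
The plan is to prove the proposition by a fractional (LP-style) weighting argument: assign a carefully chosen weight to each good so that the total weight in any bundle is at most $1$, while the sum of all weights across the ground set $\bigcup_k D_k$ equals exactly the desired lower bound $\sum_{k=0}^{\ell} |D_k|/4^{k+2}$.

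Concretely, for each good $g \in D_k$ I would set $w(g) \vcentcolon= 1/4^{k+2}$. Then by pairwise disjointness of the $D_k$'s, the total weight is
\[
\sum_{g \in \bigcup_k D_k} w(g) \;=\; \sum_{k=0}^{\ell} \frac{|D_k|}{4^{k+2}},
\]
which matches the bound on $t$ we are trying to establish.

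Next I would bound the weight inside an arbitrary bundle $B_b$ of the partition $\mathcal{B}$. Assume $B_b$ is nonempty (otherwise its contribution is $0$), and let $k^* \vcentcolon= \min\{k : B_b \cap D_k \neq \emptyset\}$ be the smallest index of a $D_k$ that meets $B_b$. Applying condition (P) to any good $g \in B_b \cap D_{k^*}$ yields $|B_b| \leq 4^{k^*+2}$. Moreover, since every good $g \in B_b$ lies in some $D_k$ with $k \geq k^*$, we have $w(g) = 1/4^{k+2} \leq 1/4^{k^*+2}$. Combining these two observations,
\[
\sum_{g \in B_b} w(g) \;\leq\; |B_b| \cdot \frac{1}{4^{k^*+2}} \;\leq\; \frac{4^{k^*+2}}{4^{k^*+2}} \;=\; 1.
\]

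Finally, summing over the $t$ bundles of the partition $\mathcal{B}$ and using the fact that these bundles cover $\bigcup_k D_k$ gives
\[
t \;=\; \sum_{b=1}^{t} 1 \;\geq\; \sum_{b=1}^{t} \sum_{g \in B_b} w(g) \;=\; \sum_{g \in \bigcup_k D_k} w(g) \;=\; \sum_{k=0}^{\ell} \frac{|D_k|}{4^{k+2}},
\]
as claimed. The only subtlety---and the step I would double-check carefully---is that condition (P) must be applied at the \emph{smallest} index $k^*$ hitting $B_b$ (so the bundle-size constraint $|B_b|\le 4^{k^*+2}$ is the tightest), while the weights $w(g)$ must also be bounded using the \emph{same} $k^*$; since weights $1/4^{k+2}$ are nonincreasing in $k$, the minimum-index choice simultaneously makes the size bound smallest and the weight upper bound largest, which is exactly what is needed for the per-bundle estimate $\sum_{g\in B_b} w(g)\le 1$ to go through. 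The integer-multiple hypothesis on $|D_k|$ is not needed for this proposition per se---it ensures the RHS is an integer, which is what makes the bound useful in Lemma~\ref{lemma:geometric-relationship}.
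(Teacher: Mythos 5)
Your proof is correct, and it takes a genuinely different route from the paper's. The paper proves the bound by a combinatorial reordering argument: it sorts the bundles by size, performs repeated good-swaps to reach an auxiliary normal form (where goods from lower-indexed $D_k$'s land in lower-indexed bundles), and then counts bundles block by block---a step that leans on the integer-multiple hypothesis on $|D_k|$. Your argument instead charges each good $g \in D_k$ a fractional weight $1/4^{k+2}$, uses property (P) at the minimum index $k^*$ meeting a bundle to show every bundle absorbs total weight at most $1$, and reads off the lower bound on $t$ from the total weight. This is a cleaner LP-dual-style proof that avoids the exchange/normalization step entirely, and your final observation is on the mark: the divisibility hypothesis plays no role in this fractional version, which thus actually proves the slightly stronger bound $t \geq \sum_k |D_k|/4^{k+2}$ without rounding; the integrality of the right-hand side only matters when the proposition is invoked inside Lemma~\ref{lemma:geometric-relationship}.
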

\begin{proof}
Index the bundles in the partition $\mathcal{B}$ such that $|B_1| \leq |B_2| \leq \ldots \leq |B_t|$. We will show that, with this indexing in hand, one can further assume that goods from lower indexed $D_k$s are assigned to lower indexed bundles; specifically, the partition $\mathcal{B}$ satisfies the following property \\
(Q): With index $a < b$, if bundle $B_\alpha$ contains a good from $D_a$ (i.e., $D_a \cap B_\alpha \neq \emptyset$) and bundle $B_\beta$ contains a good from $D_b$, then $\alpha \leq \beta$. 

In particular, this property ensures that all the goods in $D_0$ are assigned to the lowest indexed bundles, the subsequent bundles contain goods from $D_1$, and so on. 

We can directly count the smallest number the bundles, say $t^*$, that are required to cover $\cup_{k=0}^\ell D_k$ while satisfying properties (P) and (Q): property (Q) ensures that for some $t^*_0 \in \mathbb{Z}_+$, the first $t^*_0$ bundles contain all the goods in $D_0$. Since $|D_0|$ is an integer multiple of $4^2$ and each good in $D_0$ must be contained in a bundle of of size at most $4^2$ (property (P)), we get that $t^*_0 = |D_0|/4^2$. Inductively, for $D_k$, we require $t^*_k = |D_k|/4^{k+2}$ bundles.  Therefore, the desired bound follows $t \geq t^* = \sum_{k=0}^\ell t^*_k = \sum_{k=0}^\ell \frac{|D_k|}{4^{k+2}}$.

To complete the proof, we will now show that any partition $\mathcal{B} = (B_1, B_2, \ldots, B_t)$ can be iteratively transformed into another partition---without increasing $t$--that satisfies (Q), as well as (P). 

Assume that allocation $\mathcal{B}$ does not satisfy (Q) for bundles $B_\alpha $ and $B_\beta$, i.e., we have $\alpha > \beta$. In such a case, we can swap goods $g' \in B_\alpha \cap D_a$ and $g'' \in B_\beta \cap D_b$ between the two bundles and continue to maintain (P): since property (P) was satisfied before the swap, we know that $|B_\alpha| \leq 4^{a+2}$, and $|B_\beta| \leq 4^{b+2}$. Furthermore, here $a<b$ and $\alpha > \beta$. Hence, after the swap, (P) is satisfied for good $g'$: $|B_\beta| \leq |B_\alpha| \leq 4^{a+2}$. Property (P) is satisfied for $g''$ as well: $|B_\alpha| \leq 4^{a+2} < 4^{b+2}$. For all other goods, (P) directly continues to hold. 

Note that only a polynomial number of such swaps can be performed before property (Q) is satisfied. Hence, a repeated application of this swapping operation provides a partition that has the same number of bundles $t$, and it satisfies both properties (P) and (Q). This completes the proof.
\end{proof}

\begin{proposition}
\label{prop:power-product}
For any integer $\ell \geq 1$, we have $\prod_{k=0}^{\ell} \Big( \frac{1}{2^{k+1}} \Big)^{\frac{1}{2^k}} \geq \frac{1}{16}$.
\end{proposition}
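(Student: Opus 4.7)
The plan is to take logarithms and reduce the inequality to a well-known convergent series. Taking $\log_2$ of both sides, the claim $\prod_{k=0}^{\ell} \left( \frac{1}{2^{k+1}} \right)^{1/2^k} \geq \frac{1}{16}$ is equivalent to
\[
\sum_{k=0}^{\ell} \frac{k+1}{2^k} \;\leq\; 4.
\]
Since every term in this sum is positive, it suffices to bound the full infinite series $\sum_{k=0}^{\infty} \frac{k+1}{2^k}$ from above by $4$, and this will immediately cover every finite $\ell \geq 1$ (in fact every $\ell \geq 0$).

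To evaluate the infinite sum, I would invoke the standard power-series identity $\sum_{k=0}^{\infty} (k+1) x^k = \frac{1}{(1-x)^2}$, valid for $|x|<1$. Specializing to $x = 1/2$ gives $\sum_{k=0}^{\infty} \frac{k+1}{2^k} = \frac{1}{(1/2)^2} = 4$. Hence the partial sums satisfy $\sum_{k=0}^{\ell} \frac{k+1}{2^k} \leq 4$, which is exactly what is needed. Exponentiating back yields $\prod_{k=0}^{\ell} \left( \frac{1}{2^{k+1}} \right)^{1/2^k} \geq 2^{-4} = \frac{1}{16}$.

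For a self-contained write-up that avoids citing power-series facts, one can instead verify $\sum_{k=0}^{\infty} \frac{k+1}{2^k} = 4$ by an elementary manipulation: letting $T = \sum_{k \geq 0} \frac{k+1}{2^k}$, compute $T - \tfrac{1}{2}T = \sum_{k \geq 0} \frac{k+1}{2^k} - \sum_{k \geq 0} \frac{k+1}{2^{k+1}} = 1 + \sum_{k \geq 1} \frac{1}{2^k} = 2$, so $T = 4$. There is no real obstacle here; the only mild subtlety is noting that the inequality is sharp in the limit $\ell \to \infty$, which is why the constant $\tfrac{1}{16}$ cannot be improved and why the infinite-sum identity must be used with equality rather than a loose bound.
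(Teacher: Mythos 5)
Your argument is correct and essentially mirrors the paper's: both take $\log_2$, reduce the claim to bounding the partial sums of $\sum_{k\geq 0}(k+1)/2^k$ by its infinite value $4$, and evaluate that value via the same shift-and-subtract manipulation. The only cosmetic difference is that the paper carries negative signs throughout, whereas you work with the positive series and flip the inequality at the end.
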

\begin{proof}
Considering the logarithm (to the base $2$) of the left-hand-side of the stated inequality, we get  
\begin{align*}
\log  \left( \prod_{k=0}^{\ell} \left( \frac{1}{2^{k+1}}  \right)^{\frac{1}{2^k}}  \right) 
= \sum_{k=0}^\ell \frac{\log  (1/2^{k+1})}{2^k}
= \sum_{k=0}^\ell \frac{-(k+1)}{2^k}
\end{align*}
Hence, to prove the proposition, it is suffices to show that 
\begin{align}
\sum_{k=0}^\ell \frac{-(k+1)}{2^k} \geq \log \left(\frac{1}{16} \right) = -4  \label{equa-0} 
\end{align}

Towards this, consider the infinite sum 
\begin{equation}
\label{equa-1}
S \coloneqq \sum_{k=0}^{\infty} \frac{-(k+1)}{2^k}.
\end{equation}

Multiplying by $2$ on both sides we get 
\begin{equation}
\label{equa-2}
2S = \sum_{k=0}^{\infty} \frac{-(k+1)}{2^{k-1}} 
= -2 + \sum_{k=1}^{\infty} \frac{-(k+1)}{2^{k-1}}
= -2 + \sum_{k=0}^{\infty} \frac{-(k+2)}{2^k}
\end{equation}

Subtracting equation (\ref{equa-1}) from (\ref{equa-2}) gives us 

\begin{equation}
S = -2 + \sum_{k=0}^{\infty} \frac{(k+1)-(k+2)}{2^k} = -2 + \sum_{k=0}^{\infty} \frac{-1}{2^k} = -4 \label{ineq:equa-some}
\end{equation}
This completes the proof as equation (\ref{ineq:equa-some}) establishes the desired inequality (\ref{equa-0})

\begin{equation}
\sum_{k=0}^{\ell} \frac{-(k+1)}{2^k} \geq \sum_{k=0}^{\infty} \frac{-(k+1)}{2^k} = S = -4
\end{equation}
\end{proof}

\section{Missing Proofs from Section \ref{section:query-complexity-lb}}
\label{appendix:query-complexity-lb}

\begin{lemma}
\label{proposition:subadditivity-max}
Let $v, u: 2^{[m]} \mapsto \mathbb{R}_+$ be subadditive functions, and let $w(X) \coloneqq \max \{v(X), u(X) \}$, for each subset $X \subseteq [m]$. Then, the function $w: 2^{[m]} \mapsto \mathbb{R}_+$ is subadditive as well.
\end{lemma}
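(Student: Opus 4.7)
The plan is to unfold the definition of $w$ at $S \cup T$, use subadditivity of whichever of $v, u$ attains the maximum there, and then dominate term by term using the fact that $w$ pointwise upper bounds both $v$ and $u$.

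More concretely, fix arbitrary subsets $S, T \subseteq [m]$; the goal is $w(S \cup T) \leq w(S) + w(T)$. By the definition $w(X) = \max\{v(X), u(X)\}$, we have $w(S \cup T) \in \{v(S \cup T), u(S \cup T)\}$, so without loss of generality suppose $w(S \cup T) = v(S \cup T)$ (the case $w(S \cup T) = u(S \cup T)$ is handled by swapping the roles of $v$ and $u$). Subadditivity of $v$ then gives $v(S \cup T) \leq v(S) + v(T)$, and the pointwise bounds $v(S) \leq \max\{v(S), u(S)\} = w(S)$ and $v(T) \leq w(T)$ conclude the chain $w(S \cup T) \leq v(S) + v(T) \leq w(S) + w(T)$.

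There is no real obstacle here; the only thing to be slightly careful about is the ``without loss of generality'' step, which relies on the symmetric roles of $v$ and $u$ in the definition of $w$, rather than on any property like continuity or monotonicity. The argument uses nothing beyond subadditivity of the two constituent functions and the trivial inequality $\max\{a, b\} \geq a, b$, so no additional hypotheses (e.g., monotonicity or normalization) are needed.
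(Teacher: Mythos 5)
Your proof is correct and takes essentially the same approach as the paper: the paper writes the argument as a chain of inequalities on maxima ($\max\{v(S\cup T), u(S\cup T)\} \leq \max\{v(S)+v(T), u(S)+u(T)\} \leq \max\{v(S),u(S)\} + \max\{v(T),u(T)\}$), while you phrase the first step as a case split on which of $v,u$ attains the max at $S\cup T$ and then dominate term by term; these are the same idea expressed two ways.
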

\begin{proof}
For any two subsets $S,T \subseteq [m]$, by definition of $w(\cdot)$ we have 
\begin{align*}
w(S \cup T) & = \max \{ v(S \cup T), u(S \cup T) \} \\
& \leq \max \{ v(S) + v(T), u(S) + u(T) \} \tag{$v$ and $u$ are subadditive} \\
& \leq \max \{ v(S) , u(S) \} + \max \{ v(T) , u(T) \} \tag{regrouping terms}  \\
& = w(S) + w(T)
\end{align*}
This proves that, as stated, the function $w$ is subadditive.
\end{proof}

\begin{lemma}
\label{proposition:bin-marginals-max}
If functions $v, u : 2^{[m]} \mapsto \mathbb{R}_+$ have binary marginals, then the function $w : 2^{[m]} \mapsto \mathbb{R}_+$ defined as $w(X) \coloneqq  \max \{ v(X), u(X) \}$ (for each $X \subseteq [m]$) also has binary marginals.
\end{lemma}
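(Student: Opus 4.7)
The plan is to fix an arbitrary subset $S \subseteq [m]$ and a good $g \in [m] \setminus S$, and to verify that the marginal $w(S \cup \{g\}) - w(S)$ lies in $\{0,1\}$. I would first record one easy preliminary fact: because the marginals of $v$ and $u$ are nonnegative, both functions are monotone, and because their marginals are integers while (per the paper's standing convention) $v(\emptyset) = u(\emptyset) = 0$, a one-line induction on $|S|$ shows that $v$ and $u$ are integer-valued, hence so is $w$.

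Next I would sandwich the marginal of $w$ inside the interval $[0,1]$. For the lower bound, monotonicity gives $v(S \cup \{g\}) \ge v(S)$ and $u(S \cup \{g\}) \ge u(S)$, so
\[
w(S \cup \{g\}) \;=\; \max\{v(S \cup \{g\}),\, u(S \cup \{g\})\} \;\ge\; \max\{v(S), u(S)\} \;=\; w(S).
\]
For the upper bound, the binary-marginals hypothesis applied to each of $v$ and $u$ yields $v(S \cup \{g\}) \le v(S) + 1 \le w(S) + 1$ and $u(S \cup \{g\}) \le u(S) + 1 \le w(S) + 1$, so taking the max gives $w(S \cup \{g\}) \le w(S) + 1$.

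Combining these two bounds, $w(S \cup \{g\}) - w(S)$ is an integer in $[0,1]$, hence equals $0$ or $1$, which is precisely the binary-marginals property for $w$.

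There is essentially no serious obstacle here; the only subtle point is that $[0,1]$-containment alone does not suffice — one needs the integrality of $w$ to rule out fractional marginals. A slightly more hands-on alternative, which sidesteps the appeal to integrality, is a four-case analysis on which of $v, u$ attains the max at $S$ and at $S \cup \{g\}$: in the case where the maximizer switches from (say) $v$ to $u$, one exploits $u(S \cup \{g\}) > v(S) \ge u(S)$ together with $u(S \cup \{g\}) - u(S) \in \{0,1\}$ to force $u(S \cup \{g\}) = u(S) + 1$ and then a short comparison pins the marginal of $w$ to exactly $1$; the remaining cases are immediate from the marginal of whichever function dominates. Either route gives the result in a handful of lines.
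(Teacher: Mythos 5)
Your proposal is correct and follows essentially the same route as the paper: establish integer-valuedness of $w$, derive $w(S \cup \{g\}) \geq w(S)$ from monotonicity, and derive $w(S \cup \{g\}) \leq w(S) + 1$ from the binary marginals of $v$ and $u$, concluding the marginal is an integer in $[0,1]$. The only cosmetic difference is that the paper phrases the upper bound via $\max\{v(S)+\delta_1,\,u(S)+\delta_2\} \leq \max\{v(S),u(S)\} + \max\{\delta_1,\delta_2\}$, whereas you bound each argument of the max by $w(S)+1$ directly; these are the same inequality.
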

\begin{proof}
We will prove that, for any subset $S \subset [m]$ and good $g \in [m] \setminus S$, the marginal $w(S \cup \{g\}) - w(S) \in \{0,1\}$. Note that, since the functions $v$ and $u$ have 
binary marginals, they are integer-valued, i.e., $v(X), u(X) \in \mathbb{Z}_+$ for all $X \subseteq [m]$. Hence, $w$ is integer-valued as well, and to obtain the binary-marginals property for this function it suffices to show that $0 \leq w(S \cup \{g\}) - w(S) \leq 1$.

First, we will show that the marginal, $w(S \cup \{g\}) - w(S)$, is at least zero. Towards this, note that functions $v$ and $u$ are monotonic, since their marginals are binary. Hence, function $w$ is monotonic as well: $w(S \cup \{g\}) = \max \{ v(S \cup \{g\}), u (S\cup \{g\}) \} \geq \max \{ v(S), u (S) \} = w(S)$. That is,  $w(S \cup \{g\}) - w(S) \geq 0$. 

Finally, we show that marginal, $w(S \cup \{g\}) - w(S)$, is at most one. Write $\delta_1 \coloneqq  v(S \cup \{g\}) - v(S)$ and $\delta_2 \coloneqq u(S \cup \{g\}) - u(S)$. We have $\delta_1, \delta_2 \in \{0,1\}$ and 
\begin{align*}
w(S \cup \{ g\}) & = \max \{ v(S \cup \{ g \}), u(S \cup \{ g \}) \} \\
& = \max \{ v(S) + \delta_1, u(S) + \delta_2\} \\ 
& \leq \max \{ v(S), u(S) \} + \max \{ \delta_1, \delta_2 \} \tag{regrouping terms} \\ 
& \leq w(S) + 1 \tag{since $\delta_1, \delta_2 \in \{0,1\}$}
\end{align*}
Overall, these observations that the marginals of function $w$ are integers between $0$ and $1$ ensure that $w$ satisfies the binary-marginals property.  
\end{proof}

\section{Envy-Freeness and Nash Social Welfare}
\label{appendix:nomas-envy}
This section shows that, in case of binary $\XOS$ valuations, envy-free allocations can have Nash social welfare significantly lower than the optimal. Recall that an allocation $\mathcal{P} = (P_1, \ldots, P_n)$ is said to be envy free iff $v_i(P_i) \geq v_i(P_j)$ for all agents $i, j$. The example given below highlights that bounding envy between pairs of agents (or finding an arbitrary allocation that is envy-free up to one good)  does not, in and of itself, provide a constant-factor approximation for Nash social welfare.

For integer $k \in \mathbb{Z}_+$, consider a fair division instance with $n=4k$ agents and $m = 2k(2k+1)$ indivisible goods. The set of goods is partitioned into $2k+1$ disjoint subsets $M^1,M^2,\ldots, M^{2k+1}$, each with $2k$ goods. Index the goods such that $M^x = \{g^x_1, g^x_2, \ldots, g^x_{2k} \}$ for $1 \leq x \leq 2k+1$.

The first $2k$ agents have identical valuation $v(S) \coloneqq |S|$, for every $S \subseteq [m]$. The binary $\XOS$ valuation of the remaining agents $i \in \{2k+1, \ldots , 4k\}$ is as follows 
\begin{align*}
v_i(S) & \coloneqq \max_{1 \leq x \leq 2k+1} \ |S \cap M^x| \qquad \text{for every subset $S \subseteq [m]$}.
\end{align*}

We next identify an envy-free allocation $\mathcal{P} = (P_1, \ldots, P_n)$ in this instance whose Nash social welfare is $\sqrt{\nicefrac{2}{k}}$ times the optimal. Hence, by increasing $k$, one can ensure that the multiplicative gap between $\NSW(\mathcal{P})$ and the optimal Nash social welfare is arbitrarily large. 

For the first $2k$ agents $i \in [2k]$, write $P_i =  \left\{g^x_i \right\}_{x=1}^{2k}$, i.e., agent $i \in [2k]$ receives the $i$th good from every $M^x$ with $1 \leq x \leq 2k$. The $2k$ remaining goods (all in $M^{2k+1}$) are assigned as singletons to the remaining $2k$ agents: $P_j = \left\{g^{2k+1}_{j - 2k} \right\}$ for each agent $j \in \{2k+1, \ldots, 4k\}$.   The Nash social welfare $\NSW(\mathcal{P}) = \left( (2k)^{2k}\ 1^{2k} \right)^{\frac{1}{4k}} = \sqrt{2k}$.

On the other hand, the optimal Nash social welfare is at least $k$. Partition each $M^x$ into equal-sized subsets $\widehat{M}^x$ and $\widetilde{M}^x$; in particular, $\widehat{M}^x=\{g^x_1, \ldots, g^x_k\}$ and $\widetilde{M}^x=\{g^x_{k+1}, \ldots, g^x_{2k} \}$. Consider an allocation $\mathcal{N} = (N_1, \ldots, N_{4k})$ with $N_i = \widehat{M}^i$ for each $i \in [2k]$ and $N_j =\widetilde{M}^{2k-j}$ for each $j \in \{2k+1, \ldots, 4k\}$. Note that for every agent $i \in [4k]$, we have $v_i(N_i) = |N_i| = k$. Hence, the Nash social welfare of $\mathcal{N}$ is $k$. That is, the optimal Nash social welfare is at least $k$. This concludes the argument.

\end{document}